\documentclass[11pt]{article}
\usepackage{amsmath}
\usepackage{amssymb}
\usepackage{amsthm}

\usepackage{xcolor}

\usepackage{geometry}

\usepackage{vmargin}
\usepackage{authblk}

\newtheorem*{pro*}{Proposition}
\newtheorem{lem}{Lemma}
\newtheorem{pro}{Proposition}
\newtheorem{rmk}{Remark}
\newtheorem{col}{Corollary}

\newtheorem{thm}{Theorem}
\newtheorem*{thm*}{Theorem A}
\newcommand{\bm}{\boldsymbol}
\newcommand{\Tr}{\mathop{\mathrm{Tr}}}
\newcommand{\C}{\mathbb{C}}
\newcommand{\D}{\mathcal{D}}
\newcommand{\res}{\mathop{\mathrm{res}}} 
\usepackage{color}
\title{On algebro-geometric solutions to the Gelfand--Dickey hierarchy}
\date{}
\author{Zejun Zhou \thanks{zzj24601@mail.ustc.edu.cn}}
\affil{ School of Mathematical Sciences, University of Science and Technology of China, Hefei	230026, China}

\begin{document}
	\maketitle
	\begin{abstract}
		In \cite{Du19} Dubrovin introduced an $A_1$-type infinite ODE system and gave a simple way of constructing algebro-geometric solutions to the KdV hierarchy (cf. also	\cite{Du20AlgCur,BDY}). 
		In \cite{InfiODE} the infinite ODE system is generalized to $\mathfrak{g}$-type infinite ODE system, where $\mathfrak{g}$ is any simple Lie algebra. In this paper, we give a simple constructinon of algebro-geometric solutions to the Gelfand--Dickey hierarchy based on the $A_n$-type infinite ODE system and Dubrovin's method.
		As an application, we give a formula for the $N$-point function for the related Riemann $\theta$-function.
	\end{abstract}
	\section{Introduction}\label{sec1}
	Let $n\geq 1$ be an integer. The Gelfand–Dickey hierarchy with $n$ unknown functions is an infinite family of PDEs \cite{Dickey}, defined by
	\begin{equation}\label{GelfandDickey}
		\frac{\partial L}{\partial T^a_k}=[(L^{\frac{a}{n+1}+k})_+,L]\,,\quad a=1,\dots,n\,, k\geq 0\,.
	\end{equation}
	Here
	\begin{equation}
		L:=\partial^{n+1}+v_1\partial^{n-1}+v_2\partial^{n-2}+\dots+v_n\,,
	\end{equation}
	and $\partial$ is understood as $\partial_{T^1_0}$ (see \cite{Dickey} for details). When $n=1$, it is the celebrated Korteweg--de Vries (KdV) hierarchy. 
	
	In \cite{Dubrovin1974,Novikov1974a},  Dubrovin and Novikov (see also Matveev and Its \cite{Its1975}) developed a way to give a family of exact solutions to the KdV hierarchy; see also \cite{Mumford83}. These are called algebro-geometric solutions to the KdV hierarchy, also known as finite-gap solutions. In their method, the KdV hierarchy is transformed into evolution of a point in the Jacobian of a hyperelliptic curve, which is the spectral curve of the differntial operator $L$.  In \cite{Dickson1999,Dickson1999b}, Dickson--Gesztesy--Unterkofler gave algebro-geometric solutions to the Boussinesq hierarchy, and the relating spectral curves are trigonal curves. This is the case when $n=2$.  There are also studies on algebro-geometric solutions related to trigonal curves and  tetragonal curves (e.g. \cite{Zeng2025,Xu2024}).  When $n\geq3$, as far as we know, the algebro-geometric solutions to the Gelfand--Dickey hierarchy seem not to be given through explicit spectral curves.
	
	 In \cite{Krichever1976,Krichever1977} for every compact Riemann surface, Krichever constructed algebro-geometric solutions to the Kadomtsev--Petviashvili (KP) hierarchy using $\theta$-function; see also \cite{Enolski2011, KNTY88_CFT, Nakayashiki2016,SegalWilson1985}. The Gelfand--Dickey hierarchy is a reduction of the KP hierarchy, so in principle one can use Krichever's method to construct the algebro-geometric solutions to the Gelfand--Dickey hierarchy. However, explicit constructions still require study.

	 Recently Dubrovin \cite{Du19} introduced an infinite commuting  ODE system with infinitely many unknowns in terms of a matrix Laurent series $W(\lambda)\in \mathfrak{sl}_{2}(\C)((\lambda^{-1}))$. 
	    If $W(\lambda)$ is truncated at the $g$-th term then the corresponding solution to the ODE system gives \cite{Du19} an algebro-geometric solution to the KdV hierarchy, which is related to the spectral curve of $\lambda^gW(\lambda)$：
	\begin{equation}
		C:\mu^2=\det(\lambda^gW(\lambda))=\lambda^{2g+1}+\sum_{k=1}^{2g+1}q_k\lambda^{2g+1-k}\,.
	\end{equation}
	The corresponding $\tau$-function can be expressed by $\theta$-functions of this spectral curve (cf.~\cite{Du19,Nakayashiki01}). 
Some interesting properties of $\theta$-functions are obtained in \cite{Du19} based on also the matrix-resolvent method (cf.~\cite{BDY,BDY2}). 
 This new way from \cite{Du19} to get algebro-geometric solutions (cf. also \cite{Du20AlgCur}) is direct and simpler.

	In \cite{DS}  the KdV hierarchy is generalized to Drinfeld--Sokolov hierarchy of $\mathfrak{g}$ type, where $\mathfrak{g}$ is any simple Lie algebra and the KdV hierarchy corresponds to $\mathfrak{sl}_2(\C)$. 
		In \cite{InfiODE} Dubrovin's infinte ODE system is generalized from $\mathfrak{sl}_2(\C)$ to $\mathfrak{g}$. It was proved in \cite{InfiODE} that solutions to the infinite ODE system lead to $\tau$-functions \cite{BDY2} of the Drinfeld--Sokolov hierarchy of $\mathfrak{g}$ type.
	
 As suggested in \cite{InfiODE}, it is possible to extend Dubrovin's new way \cite{Du19,Du20AlgCur} of constructing algebro-geometric solutions to the Drinfeld--Sokolov hierarchy of $\mathfrak{g}$ type. In this paper we are going to achieve this in the $\mathfrak{g}=A_{n}$ case.
  
  Let $W(\lambda)\in \mathfrak{sl}_{n+1}(\C)[\lambda,\lambda^{-1}]$ be given by
\begin{equation}\label{WlambdaLaurentPoly}
	W(\lambda)=\Lambda(\lambda)+\sum_{k\geq 0}^{m}\frac{W_k}{\lambda^k}\,.
\end{equation}
Here 
\begin{equation}
	\Lambda(\lambda)=\lambda E_{n+1,1}+\sum_{k=1}^nE_{k,k+1}
\end{equation}
is the cyclic element, and $W_0$ is a lower trangular matrix. The spectral curve of $\lambda^mW(\lambda)$ is given by
\begin{equation}\label{SpectralCurve}
C:	S(\lambda,\mu):=\det(\mu I_{n+1}-\lambda^mW(\lambda))=0\,.
\end{equation}
In the generic situation, its genus is $g=\frac{mn(n+1)}{2}$. Let $a_1,\dots,a_g,b_1,\dots,b_g\in H_1(C,\mathbb{Z})$ be its canonical cycles, and $\omega_1,\dots,\omega_g$ be the corresponding normalized holomorphic differentials on $C$.
 In Section \ref{secTau} we will show that $C$ has only one point $\infty_C$ at infinity, and the local coordinate can be taken as \begin{equation}\label{def_z}
 	z:=\lambda^{-1/(n+1)}\,.
 \end{equation}
 Define the vectors $\bm V^{(k)}=(V^{(k)}_1,\dots,V^{(k)}_g)$, $k\geq 1$ by 
\begin{equation}\label{expand_omega}
\omega_i(z)=-\sum_{k\geq 1}V^{(k)}_iz^{k-1}dz\,,\quad i=1,\dots, g\,,k\geq 1\,.
\end{equation} 
Let $\omega(P,Q)$, $P,Q\in C$ be the fundamental normalized 
bi-differential (see \cite{Fay}). Let $z,w$ be the local coordinates of $P,Q$ when $P,Q$ is near $\infty_C$. Define $q_{i,j}$, $i,j\geq 1$ by 
\begin{equation}\label{expand_bidiffer}
	\omega(z,w)=\frac{dzdw}{(z-w)^2}+\sum_{i,j\geq 1}q_{i,j}z^{i-1}w^{j-1}dzdw\,.
\end{equation}

There is a natrual line bundle $\mathcal{L}$ of the eigenvector of $W(\lambda)$ on $C$, and in Section \ref{secTau} we will prove that the degree of $\mathcal{L}$ is $-g-n$. We can identify degree zero divisors with their image under the Abel--Jacobi map. Let $-\mathcal{D}_{\mathcal{L}}$ be a divisor of $\mathcal{L}$, we define
\begin{equation}\label{Defu}
	\bm u:=\mathcal{D}_{\mathcal{L}}-(n+1)\infty_C-\Delta\in J(C)\,.
\end{equation}
Here $\Delta$ is the Riemann divisor and its degree is $g-1$.

Following Dubrovin \cite{Du20AlgCur} we will prove  in Section \ref{secTau}
 \begin{pro}\label{ThmGDtau}
	 The function $Z(t_1,t_2,\dots)$ defined by
	\begin{equation}\label{GDtau}
		Z(t_1,t_2,\dots):=\exp\left(\sum_{i,j\geq 1}\frac{1}{2}q_{i,j}t_it_j\right)\theta\left(\sum_{k\geq 1}t_k\bm V^{(k)}-\bm u\right)\,
	\end{equation}
	is the $\tau$-function of the solution to the infinite ODE in \cite{InfiODE} corresponding to $W(\lambda)$ (and so is a $\tau$-function of the Gelfand--Dickey hierarchy). Here $\theta$ is the $\theta$-function \cite{Dubrovin1981,Du19,Du20AlgCur}  associated to \eqref{SpectralCurve}, and $t_{a+(n+1)k}:=T^a_k$, $a=1,\dots,n$, $k\geq 0$. 
\end{pro}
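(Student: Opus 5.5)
Following Dubrovin \cite{Du20AlgCur}, the plan is to build the Baker--Akhiezer function on $C$ from the eigenvector line bundle $\mathcal{L}$ and to identify its wave function with the one coming from the tau-function of \cite{InfiODE}. The matrix-resolvent description of that tau-function will serve as the bridge to theta-functions.

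First I will establish the geometry used in the statement. Since $\lambda^mW(\lambda)=\lambda^m\Lambda(\lambda)+\dots$ and $\Lambda(\lambda)^{n+1}=\lambda I$, the equation $S(\lambda,\mu)=0$ near $\lambda=\infty$ gives $\mu\sim\lambda^{m+1/(n+1)}\epsilon$ with $\epsilon^{n+1}=1$. The $n+1$ branches are permuted cyclically, so there is a single point $\infty_C$ with local parameter $z=\lambda^{-1/(n+1)}$. The genus $g=mn(n+1)/2$ then follows from Riemann--Hurwitz applied to $\lambda:C\to\mathbb{P}^1$, which has degree $n+1$: the discriminant of $S$ in $\mu$ has degree $mn(n+1)+n$ in $\lambda$ in the generic case, and $\infty_C$ contributes ramification $n$. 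For $\deg\mathcal{L}=-g-n$, take a normalized eigenvector $\psi(P)$, for instance with last component $1$, as a meromorphic section. Count its poles using the standard formula for eigenvector bundles: $\deg\mathcal{L}^{*}$ equals half the ramification, corrected by the behaviour at $\infty_C$ determined by $\Lambda$. This gives degree $g+n$ for the pole divisor $\mathcal{D}_\mathcal{L}$.

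Second, the flows. By \cite{InfiODE}, the ODE system is $\partial W/\partial t_{a+(n+1)k}=[(\lambda^kM_a(\lambda))_+,W]$, where $M_a$ are the basic resolvents with $M_a\sim\Lambda^a+\dots$. Each such flow is isospectral, so $C$ is fixed. I will introduce the Baker--Akhiezer vector $\Psi(t;P)$ solving $\partial_{t_j}\Psi=(\lambda^kM_a)_+\Psi$ with $W\Psi=\mu\lambda^{-m}\Psi$. Its behaviour near $\infty_C$ is $e^{\sum t_jz^{-j}}(\dots)$, and its poles lie on $\mathcal{D}_\mathcal{L}$ uniformly in $t$. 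The first component is then the Krichever BA function for the data $(C,\infty_C,z,\mathcal{D}_\mathcal{L})$. By the standard theta-formula it equals
\[
\frac{\theta(A(P)+\sum t_k\bm V^{(k)}-\bm u)\,\theta(\bm u')}{\theta(A(P)-\bm u)\,\theta(\sum t_k\bm V^{(k)}-\bm u')}\exp\Big(\sum t_k\Omega_k(P)\Big),
\]
where $\Omega_k$ are normalized second-kind abelian integrals. The shift by $(n+1)\infty_C$ and by $\Delta$ in the definition of $\bm u$ is chosen exactly so that the Riemann vanishing theorem places the zeros of the theta-function in the denominator at the $g$ effective points of $\mathcal{D}_\mathcal{L}$.

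Third, the identification. Expand $\Omega_k$ in $z$ using \eqref{expand_bidiffer}, noting $d\Omega_k=\partial_w^{k-1}\omega\,/(k-1)!$ at $w=0$. This puts the BA function in Sato form $Z(t-[z])/Z(t)\cdot e^{\sum t_kz^{-k}}$ with $Z$ as in \eqref{GDtau}; the quadratic exponential $\frac12\sum q_{ij}t_it_j$ absorbs the regular parts of the $\Omega_k$. On the other side, \cite{InfiODE} shows that the tau-function of the ODE solution satisfies $\partial_{t_i}\partial_{t_j}\log\tau=$ the two-point resolvent correlator of $W$, and that its wave function is the normalized first component of $\Psi$. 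Since two tau-functions with the same wave function agree up to $e^{\text{linear}}$, matching the two wave functions shows $Z$ equals $\tau$ up to this gauge, which is the allowed ambiguity.

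The main obstacle will be the bookkeeping of the divisor: showing that the normalization of $\Psi$ really yields the effective divisor that produces exactly $\bm u$ in \eqref{Defu}, together with the degree computation $-g-n$. For this I will compute the divisor of the cofactor vector of $\mu-\lambda^mW$ directly and check the order at $\infty_C$ explicitly, using the cyclic structure of $\Lambda$.
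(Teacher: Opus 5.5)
\textbf{Gap at the key identification step.} Your computation putting the Krichever Baker--Akhiezer function in Sato form is fine. It is essentially the alternative route the paper only alludes to in its remark after the proof. But it only shows that $Z$ is the Sato $\tau$-function of the Krichever data $(C,\infty_C,z,\mathcal{D})$. What must be proved is that $Z$ is the $\tau$-function of the ODE solution. By definition this means $\partial^2\log Z/\partial T^a_k\partial T^b_l=F_{a,k;b,l}(\bm t)$, where $F$ are the matrix-resolvent two-point functions \eqref{DefTaustr}. Your bridge is the claim that \cite{InfiODE} shows ``its wave function is the normalized first component of $\Psi$''. That is not something you can cite. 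As this paper uses it, \cite{InfiODE} characterizes the $\tau$-function only through \eqref{DefTau} and \eqref{mPointCorrelator}. Knowing that this $\tau$ is ``a Gelfand--Dickey $\tau$-function'' tells you neither which wave function it has nor how the resolvent $\tau$-structure compares with Sato's $\partial^2\log\tau$. That identification is exactly the content of the proposition, so your proposal assumes its key step rather than proving it. Note that you quote the two-point formula but never use it.

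To close the gap you need one of two things. The first option is an honest Drinfeld--Sokolov-to-Gelfand--Dickey argument. From $\partial_{t_k}\bm v=(R_1^k)_+\bm v$, with $(R_1)_+$ equal to $\Lambda$ plus a lower triangular matrix, show that $v^1$ satisfies $Lv^1=\lambda v^1$ and $\partial_{t_k}v^1=(L^{k/(n+1)})_+v^1$ for the scalar operator $L$ determined by $W(\lambda,\bm t)$. Then prove, or cite precisely, that the $F_{a,k;b,l}$ coincide with Sato's second logarithmic derivatives for that $L$.

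The second option is the paper's direct verification of \eqref{DefTau}. Writing $R_a=\sum_j z(P_j)^{-a}\Pi_j$ turns the generating series of the $F_{a,k;b,l}$ into the expansion at $\infty_C$ of $\Tr(\Omega(P)\Omega(Q))/(\lambda(P)-\lambda(Q))^2$, as in \eqref{OmegazFakm}. The remaining steps are:
\begin{itemize}
\item the formula $\Omega_{ij}=\psi_i\psi^\dagger_j\,d\lambda/(\bm\psi^\dagger\bm\psi)$, which needs the dual eigenvector and its divisor $\mathcal{D}^\dagger$, with $\mathcal{D}+\mathcal{D}^\dagger-2\infty_C$ canonical; this ingredient is entirely absent from your plan;
\item theta/prime-form formulas for $\Omega_{1,n+1}$ and for bases of $L(\mathcal{D}+n\infty_C)$ and $L(\mathcal{D}^\dagger+n\infty_C)$;
\item the residue identity \eqref{innerProduct}, which gives Proposition~\ref{OmegaPro};
\item Fay's identity.
\end{itemize}
In this route the Baker--Akhiezer function is used only to obtain the linearization $\bm u(\bm t)=\bm u-\sum_k t_k\bm V^{(k)}$.

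Separately, fix the divisor bookkeeping. Krichever data require the degree-$g$ divisor $\mathcal{D}$ coming from the normalization $\psi_1=1$, whose pole divisor is $\mathcal{D}+n\infty_C$. A section normalized by its last component instead has $g+n$ affine poles. The corresponding first component of $\Psi$ behaves like $z^n e^{\sum t_kz^{-k}}$ at $\infty_C$, so it is not the Krichever Baker--Akhiezer function attached to $\bm u$ in \eqref{Defu}.
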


	In \cite{Du19,Du20AlgCur} Dubrovin proved a formula for the logrithmic derivatives of $\theta$-functions coming from algebro-geometric solutions to the KdV hierarchy and the $N$-wave hierarchy.  Moreover, he used these formulas to show that certain combinations of the Taylor coefficients of $\log\theta$ at certain points are all rational numbers.
	Let us also generalize these to $A_n$.

	  Following \cite{Du20AlgCur}, define a matrix-valued function
		\begin{equation}\label{DefPhiP}
		\Phi(P):=\frac{S(\lambda,W)-S(\lambda,\mu)}{W-\mu}\,,\quad P=(\lambda,\mu)\in C\,.
	\end{equation}
Here we take the change $W(\lambda)\mapsto \lambda^mW(\lambda)$.	Define $\Pi(P)$ by 
	\begin{equation}\label{Projector}
		\Pi(P):=\frac{\Phi(P)}{S_{\mu}(\lambda,\mu)}\,,\quad P=(\lambda,\mu)\in C\,.
	\end{equation}
The following theorem will be proved in Section \ref{secProof1}.
	\begin{thm}\label{mainCol}
	 For any $n\geq 1$, $N\geq 2$, let $\lambda_1,\dots,\lambda_N$ be distinct complex numbers and let $P_1=(\lambda_1,\mu_1),\dots,P_N=(\lambda_N,\mu_N)$ be $N$ points on $C$ given in \eqref{SpectralCurve}. Define $\omega_N$ by
	\begin{equation}\label{omega_m}
		\omega_N(P_1,\dots,P_N):=\sum_{i_1,\dots,i_N=1}^{g}\frac{\partial^N \log\theta(\bm u)}{\partial u_{i_1}\dots\partial u_{i_N}}\omega_{i_1}(P_1)\dots\omega_{i_N}(P_N)+\delta_{N,2}\omega(P_1,P_2)\,,
	\end{equation}
	where $\bm u$ is defined by \eqref{Defu}.
	Then we have 
	\begin{equation}\label{multidifferential}
		\omega_N(P_1,\dots,P_N)=-\frac{1}{N}\sum_{s\in S_N}\frac{\Tr\bigl(\Pi(P_{s_1})\dots\Pi(P_{s_N})\bigr)d\lambda_1\dots d\lambda_N}{\prod_{i=1}^{N}(\lambda_{s_{i+1}}-\lambda_{s_{i}})}\,.
	\end{equation}
Here $S_N$ is the permutation group of $N$ elements, and $s_{N+1}:=s_{1}$.
\end{thm}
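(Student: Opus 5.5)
The plan is to compare two expressions for the generating functions of logarithmic derivatives of the $\tau$-function $Z$ of Proposition~\ref{ThmGDtau} at $t=0$. One expression is analytic, coming from the $\theta$-formula (\ref{GDtau}); the other is algebraic, coming from the matrix-resolvent formula of \cite{BDY2,InfiODE}. The identity is then continued from a neighbourhood of $\infty_C$ to all of $C^N$.

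\textbf{Analytic side.} First I take $\log Z$ from (\ref{GDtau}) and differentiate $N$ times in the $t_{k}$ at $t=0$. Using (\ref{expand_omega}), this gives
\[
\sum_{k_1,\dots,k_N}\frac{\partial^N\log Z}{\partial t_{k_1}\cdots\partial t_{k_N}}\Big|_{t=0}\,z_1^{k_1-1}\cdots z_N^{k_N-1}dz_1\cdots dz_N
=(-1)^N\sum_{i_1,\dots,i_N}\partial^N_{i_1\dots i_N}\log\theta(-\bm u)\,\omega_{i_1}\cdots\omega_{i_N}+\delta_{N,2}\sum q_{i,j}z^{i-1}w^{j-1}dzdw .
\]
Since $\theta$ is even, this equals the first term of (\ref{omega_m}) evaluated at $\bm u$. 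For $N=2$, adding $dzdw/(z-w)^2$ turns the $q$-part into $\omega(P_1,P_2)$ by (\ref{expand_bidiffer}). So near $\infty_C^N$, $\omega_N$ is exactly the generating series of $N$-point correlators of $Z$, with the double-pole regularisation in the case $N=2$.

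\textbf{Algebraic side.} The $\tau$-function of the infinite ODE solution is constructed in \cite{InfiODE} by the matrix-resolvent method of \cite{BDY2}. There the $N$-point generating functions at $t=0$ are
\[
-\frac1N\sum_{s\in S_N}\frac{\Tr\bigl(R(\lambda_{s_1})\cdots R(\lambda_{s_N})\bigr)}{\prod_i(\lambda_{s_{i+1}}-\lambda_{s_i})}\;-\;\delta_{N,2}\,(\text{explicit rational correction}),
\]
where $R(\lambda)$ is the basic matrix resolvent, i.e.\ the formal series in $z$ commuting with $W$ and normalised like $\Lambda$. At $t=0$ the initial data is $W(\lambda)$ itself. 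The key identification is then that the branches of the resolvents $R_a(\lambda)$, written in the formal variable $z$ (the $n+1$ roots of $\lambda$), coincide with $\Pi(P)$ times the appropriate Jacobian factor $d\lambda/dz$ at $P=(\lambda,\mu(z))$.

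To prove this identification I would use two facts. First, $\Phi(P)$ is the adjugate of $\mu-\lambda^mW$ restricted to $C$, so $\Pi(P)$ is the spectral projector onto the $\mu$-eigenline: it satisfies $\Pi^2=\Pi$, $[\Pi,W]=0$ and $\sum_{\text{sheets}}\Pi=I$. Second, the resolvent is characterised by commuting with $W$, being rank-one-idempotent-like with fixed trace, and having the correct leading term. Matching these normalisations, including the power of $\lambda$ arising from the substitution $W\mapsto\lambda^mW$ and the relation $d\lambda=-(n+1)z^{-n-2}dz$, should give exactly the factor $d\lambda_1\cdots d\lambda_N$ appearing in (\ref{multidifferential}).

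\textbf{Continuation and the main obstacle.} Equating the two sides proves (\ref{multidifferential}) as an identity of formal series in $z_1,\dots,z_N$ near $\infty_C$. Both sides are meromorphic multidifferentials on $C^N$, so the identity extends by analytic continuation. On the right-hand side the apparent poles at $\lambda_i=\lambda_j$ on different sheets cancel by the projector orthogonality $\Pi(P)\Pi(Q)=0$, which is why the $\lambda_i$ are assumed distinct.

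I expect the main obstacle to be the case $N=2$. There the resolvent formula carries an extra normalising term, for instance $\Tr(R(\lambda)R(\mu))/(\lambda-\mu)^2$ minus a constant-type term. I must check that after passing to $z$-coordinates this correction reproduces precisely $dz\,dw/(z-w)^2$, so that the $\omega(P_1,P_2)$ term in (\ref{omega_m}) appears with no leftover.

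A second delicate point is the sign and shift conventions: the choice of $-\bm u$ versus $\bm u$ in (\ref{Defu}), and the indexing $t_{a+(n+1)k}$. I would first verify both the $N=2$ matching and these conventions in the KdV case $n=1$, where everything reduces to \cite{Du19}, and then redo the same computation for general $n$.
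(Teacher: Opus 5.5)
Your proposal is correct and matches the paper's proof. Like you, the paper takes the $\tau$-function $Z$ of Proposition~\ref{ThmGDtau}, expands its $N$-th logarithmic derivatives via \eqref{expand_omega} and \eqref{expand_bidiffer} (using $\bm V^{((n+1)k)}=\bm 0$, $q_{(n+1)k,l}=0$, and implicitly the evenness of $\theta$), and equates the result with \eqref{OmegazFakm}, which is \eqref{mPointCorrelator} rewritten through $\Pi(z)=\frac{1}{n+1}\bigl(I_{n+1}+\sum_{a=1}^{n}z^aR_a(\lambda)\bigr)$. The identification you state only loosely is exactly this relation, equivalently $R_a(\lambda)=\sum_j z(P_j)^{-a}\Pi(P_j)$, and its $I_{n+1}$ term together with the $\delta_{a_1+a_2,n+1}$ correction produces $dz_1dz_2/(z_1-z_2)^2$ for $N=2$, while dropping out for $N\geq 3$.
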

For the case when $n=1$, Theorem \ref{mainCol} was proved in \cite{Du19}.

 As a direct corollary of Theorem \ref{mainCol}, we have the following 
	\begin{col}\label{Rational}
	Let $W(\lambda)\in\mathfrak{sl}_{n+1}(\C)[\lambda,\lambda^{-1}]$ be given by \eqref{WlambdaLaurentPoly}. If every entry of $W(\lambda)$ belongs to $\mathbb{Q}[\lambda,\lambda^{-1}]$ and $N\geq 3$, then all the $N$-th Taylor coefficients of $\log \theta\left(\sum_{k\geq 1 }t_k\bm V^{(k)}-\bm u\right)$ at $\bm t=\bm 0$ are rational numbers.
\end{col}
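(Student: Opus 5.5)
The plan is to deduce Corollary \ref{Rational} from Theorem \ref{mainCol} by expanding both sides of \eqref{multidifferential} near $\infty_C$ in the local coordinates $z_i=\lambda_i^{-1/(n+1)}$. The left-hand side is expanded with \eqref{expand_omega}; the right-hand side is expanded directly from \eqref{Projector}.

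\textbf{Left-hand side.} For $N\geq 3$ the term $\delta_{N,2}\omega(P_1,P_2)$ is absent. Substituting \eqref{expand_omega} gives
\begin{equation*}
\omega_N=(-1)^N\sum_{k_1,\dots,k_N\geq1}\Bigl(\sum_{i_1,\dots,i_N}\partial_{u_{i_1}}\cdots\partial_{u_{i_N}}\log\theta(\bm u)\,V^{(k_1)}_{i_1}\cdots V^{(k_N)}_{i_N}\Bigr)z_1^{k_1-1}\cdots z_N^{k_N-1}dz_1\cdots dz_N .
\end{equation*}
By the chain rule, the bracket equals $(-1)^N\partial_{t_{k_1}}\cdots\partial_{t_{k_N}}\log\theta\bigl(\sum_k t_k\bm V^{(k)}-\bm u\bigr)\big|_{\bm t=0}$, using that $\theta$ is even. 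So the $N$-th Taylor coefficients we want are exactly the coefficients of the $z$-expansion of $\omega_N$, up to sign. This works for all $N\geq 3$, and Theorem \ref{mainCol} holds for all $N\geq 2$.

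\textbf{Right-hand side.} We show each coefficient of its expansion in $z_1,\dots,z_N$ is rational. First, $S(\lambda,\mu)$ has rational coefficients, and so does the matrix polynomial $\Phi$ in $(\lambda,\mu,\lambda^mW)$. Near $\infty_C$ the point $P=(\lambda,\mu)$ corresponds to an eigenvalue branch $\mu(z)$ of $\lambda^mW(\lambda)$. Because $\Lambda$ is cyclic, $\lambda^{-m}\mu$ is a Laurent series in $z$ of the form $z^{-1}(1+\dots)$. I will argue its coefficients are rational: they are determined recursively and uniquely by solving $S(z^{-(n+1)},\mu)=0$, with leading term $\mu^{n+1}\approx\lambda^{m(n+1)+1}$ and leading coefficient chosen as $1$, a root of unity choice that fixes the coordinate $z$. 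Hence $\Pi(P)$ has rational Laurent coefficients in $z$, and $d\lambda_i=-(n+1)z_i^{-n-2}dz_i$.

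The remaining worry is the denominators $\lambda_{s_{i+1}}-\lambda_{s_i}$, which do not expand as power series in the $z$'s. Theorem \ref{mainCol} implies the symmetrized sum is regular at $z_i=0$, so these apparent singularities cancel. To make this rigorous, I would note that the sum, multiplied by $\prod_{i<j}(\lambda_i-\lambda_j)$, is a Laurent series with rational coefficients. Dividing by this polynomial, which has rational coefficients, gives a series that the left-hand side shows to be a power series in the $z_i$. Division in the ring of iterated Laurent series (first expand in $z_1$, then $z_2$, and so on) preserves rationality. The coefficients then agree with those of the honest power series, since that series embeds in the iterated Laurent ring.

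The main obstacle is justifying that expansion of $\mu$ and the eigenprojector near $\infty_C$ introduces no irrational constants. In particular, the normalization of $z$ must be the one in \eqref{def_z}, so that no $(n+1)$-th roots of unity or roots of coefficients enter. This rests on the single-point-at-infinity structure established in Section \ref{secTau}. Comparing coefficients then finishes the proof.
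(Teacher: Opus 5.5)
Your proof is correct, and its skeleton is what the paper means by ``direct corollary'' of Theorem \ref{mainCol}. You expand both sides of \eqref{multidifferential} at $P_1=\dots=P_N=\infty_C$ in $z_i=\lambda_i^{-1/(n+1)}$. For $N\geq 3$ the left side is a power series whose coefficients are the required Taylor coefficients, up to the sign $(-1)^N$ coming from evenness of $\theta$. It remains to show the right side has rational coefficients, and this is the only point where you depart from the paper.

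\textbf{How the paper gets rationality.} It is essentially free there. By \eqref{OmegazFakm}, i.e. \eqref{mPointCorrelator} together with \eqref{OmegaAndR}, the expansion of the symmetrized trace sum is exactly the generating series of the $\tau$-structures $F_{a_1,l_1;\dots;a_N,l_N}\in\mathcal{P}$. These are polynomials with rational coefficients in the entries of $W$, since $U$ and $R_a$ are built from $W$ by rational operations. So both the rationality and the cancellation of the denominators $\lambda_{s_{i+1}}-\lambda_{s_i}$ come from the infinite ODE theory.

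\textbf{How you get it.} You use only the closed formula \eqref{Projector}, with Theorem \ref{mainCol} as a black box. Both of your steps are sound: the Hensel-type recursion for the branch $\lambda^{-m}\mu=z^{-1}(1+\dots)$, and the division by $\prod_{i<j}(\lambda_i-\lambda_j)$ in the field of iterated Laurent series over $\mathbb{Q}$. For the first step, the precise input is this: substituting $\lambda=z^{-(n+1)}$ and $\mu=z^{-m(n+1)-1}y$ turns $S(\lambda,\mu)=0$ into $y^{n+1}-1+z\,f(z,y)=0$ with $f\in\mathbb{Q}[z,y]$. That uses the weighted form \eqref{algebraicCurveDef}, i.e. that $W_0$ is lower triangular, not merely that $\Lambda$ is cyclic. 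An upper-triangular part of $W_0$ could change the leading constant and bring in irrational roots.

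\textbf{Comparison.} Your version is more self-contained and purely algebraic. The paper's is shorter because the $\tau$-structure formalism has already done this bookkeeping.
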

For the case when $n=1$, Corollary \ref{Rational} was proved in \cite{Du19}.

In \cite{Du19} Dubrovin also proved that every $\tau$-function of the KdV hierarchy can be approximated by $\theta$-functions of hyperelliptic curves. We will generalize this result to the Gelfand--Dickey hierarchy.

	The paper is organized as follows. In Section \ref{secTau} we will prove Proposition \ref{ThmGDtau} by solving the ODE defined in \cite{InfiODE}. In Section \ref{secProof1} we wil prove Theorem \ref{mainCol}.  Section \ref{sec7} is on an example.

	\section{The algebro-geometric $\tau$-functions for the Gelfand--Dickey hierarchy}\label{secTau}
	In this section we will prove Proposition \ref{ThmGDtau}. Our proof is mainly along the lines in \cite{Du19,Du20AlgCur} and is based on \cite{InfiODE}.
	
	\subsection{The ODE system of $W(\lambda)$ }
	
	Let us review 	the ODE system introduced in \cite{InfiODE} 
	for an element $W(\lambda)\in\mathfrak{sl}_{n+1}(\mathbb{C})((\lambda^{-1}))$:
	\begin{equation}\label{WlambdaLuarent}
		W(\lambda)=\Lambda(\lambda)+W_0+\sum_{k=1}^{\infty}\frac{W_k}{\lambda^k}\,.
	\end{equation} 
	We always assume $W_0$ is lower triangular. 
	Define \cite{InfiODE} a Poisson bracket on $\mathcal{P}:= \mathbb{C}[(W_{k})_{ij}|k\geq 1\, \text{or} \,k=0, i\geq j]$ by 
	\begin{equation}\label{PoiBra}
		\left\{W(\lambda) \underset{'}{\otimes} W(\mu)\right\}:=[P(\lambda-\mu),W(\lambda)\otimes 1+1\otimes W(\mu)].
	\end{equation}
	Here 
	\begin{equation}
		P(\lambda)=\sum_{i,j=1}^{n+1}\frac{E_{ij}\otimes E_{ji}}{\lambda}\,.
	\end{equation}
	The principal gradation of $\mathfrak{sl}_{n+1}(\mathbb{C})((\lambda^{-1}))$ is 
	\begin{equation}\label{PrincipalGrad}
		\deg E_{ij}=j-i\,,\quad \deg \lambda=n+1\,.
	\end{equation} 
	It has been proved \cite{InfiODE} that for any $W(\lambda)$ there exist unique pair $(U(\lambda), H(\lambda))$ of the form:
	\begin{align}
		U(\lambda)=\sum_{k\geq1}U^{[-k]}(\lambda)\in {\rm Im}\,{\rm ad_{\Lambda(\lambda)}}\,,\quad \deg U^{[-k]}(\lambda)=-k\,,\\
		H(\lambda)=\sum_{k\geq 0}H^{[-k]}(\lambda)\in {\rm Ker} \,{\rm ad_{\Lambda(\lambda)}}\,, \quad \deg H^{[-k]}(\lambda)=-k\,,
	\end{align}
	satisfying 
	\begin{equation}\label{DecomW}
		e^{-{\rm ad}_{U(\lambda)}}W(\lambda)=\Lambda(\lambda)+H(\lambda)\,.
	\end{equation}
	The matrix $H(\lambda)$ can be uniquely written as 
	\begin{equation}\label{HamiltonianDef}
		H(\lambda)=\sum_{a=1}^{n}\sum_{k\geq-1}\frac{h_{a,k}}{n+1}\Lambda(\lambda)^{n+1-a}\lambda^{-k-2}\,.
	\end{equation}
	Here $h_{a,k}\in\mathcal{P}$.   The {\it basic matrix resolvents of $W(\lambda)$} are defined as
	\begin{equation}\label{DefMatResolvent}
		R_a(\lambda):=e^{{\rm ad}_{U(\lambda)}}\Lambda(\lambda)^a\,,\quad a=1,\dots,n\,.
	\end{equation}
	Define the hamiltonian flows 
	\begin{equation}\label{ODE}
		\frac{d W(\lambda)}{dT^a_k}=\left\{h_{a,k},W(\lambda)\right\}\,.
	\end{equation}
	All the flows $\frac{d}{d T^a_k}$, $a=1,\dots,n$, $k\geq 0$ commute with each other, so the ODE system \eqref{ODE} has a solution $W(\lambda,\bm T)\in \mathfrak{sl}_{n+1}(\C)[\lambda,\lambda^{-1}][[\bm T]]$, for any initial value. Here $(\bm T)=(T^a_k)_{a=1,\dots,n,\, k\geq 0}$.
	The $\tau$-structures $F_{a,k;b,l}$, $a,b=1,\dots,n$, $k,l\geq 0$ of the ODE system \eqref{ODE} are defined by 
	\begin{equation}\label{DefTaustr}
		\sum_{k,l\geq 0}\frac{F_{a,k;b,l}}{\lambda^{k+1}\nu^{l+1}}=\frac{\Tr(R_a(\lambda)R_b(\nu))}{(\lambda-\nu)^2}-\delta_{a+b,n+1}\frac{a\lambda+b\nu}{(\lambda-\nu)^2}\,,\quad a,b =1,\dots,n\,.
	\end{equation}
	For any solution to the ODE system \eqref{ODE}, there exist a $\tau(\bm T)$,  such that
	\begin{equation}\label{DefTau}
		F_{a,k;b,l}(\bm T)=\frac{\partial ^2\log \tau(\bm T)}{\partial T^a_k\partial {T^b_l}}\,.
	\end{equation}
	This $\tau(\bm T)$ is called the $\tau$-function of the solution $W(\lambda,\bm T)$, and is proved to be a $\tau$-function of the Gelfand--Dickey hierarchy \cite{InfiODE}. 
	
	Let $N\geq 2$, define polynomials $F_{a_1,k_1;\dots,a_N,k_N}\in \mathcal{P}$ by
	\begin{equation}
		F_{a_1,k_1;\dots;a_N,k_N}:=\frac{d^N\log \tau}{dT^{a_1}_{k_1}\dots dT^{a_N}_{k_N}}.
	\end{equation}
	It was proved \cite{InfiODE} that
	\begin{align}
		\sum_{k_1,\dots,k_N\geq 0}\frac{F_{a_1,k_1;\dots;a_N,k_N}}{\lambda_1^{k_1+1}\cdots\lambda_N^{k_N+1}}=-\frac{1}{N}\sum_{s\in S_N}\frac{\Tr\bigl(R_{a_{s_1}}(\lambda_{s_1})\dots R_{a_{s_N}}(\lambda_{s_N})\bigr)}{\prod_{i=1}^{N}(\lambda_{s_i}-\lambda_{s_{i+1}})}\nonumber\\
		-\delta_{N,2}\delta_{a_1+a_2,n+1}\frac{a_1\lambda_1+a_2\lambda_2}{(\lambda_1-\lambda_2)^2}\,.\label{mPointCorrelator}
	\end{align}
	Therefore for a solution $W(\lambda,\bm T)$, we have an explicit expression for the generating series of the $N$-th order Taylor coefficients of $\tau(\bm T)$:
	\begin{align}
		\sum_{k_1,\dots,k_N\geq 0}\frac{\frac{\partial^N\log \tau}{\partial T^{a_1}_{k_1}\dots \partial T^{a_N}_{k_N}}(\bm 0)}{\lambda_1^{k_1+1}\cdots\lambda_N^{k_N+1}}=&-\frac{1}{N}\sum_{s\in S_N}\frac{\Tr\bigl(R_{a_{s_1}}(\lambda_{s_1},\bm 0)\dots R_{a_{s_N}}(\lambda_{s_N},\bm 0)\bigr)}{\prod_{i=1}^{N}(\lambda_{s_i}-\lambda_{s_{i+1}})}\nonumber\\
		&-\delta_{N,2}\delta_{a_1+a_2,n+1}\frac{a_1\lambda_1+a_2\lambda_2}{(\lambda_1-\lambda_2)^2}\,,\quad N\geq 2\,.\label{mtauPointCorrelator}
	\end{align}

	For $n=1$  the ODE restricted to truncated $W(\lambda)$ was considered \cite{Du19}. This restricted ODE system also appeared  in  \cite{Nakayashiki01}.
	Inspired by these work, we consider the case when $W(\lambda)$ is given by \eqref{WlambdaLaurentPoly}. 
	It has been shown in \cite{FT-book} that for any positive $M,N$, the subspace 
	\begin{equation}\label{TruancatedPoissonSub}
		\left\{W(\lambda)\in \mathfrak{sl}_{n+1}(\C)((\lambda^{-1}))|W(\lambda)=\sum_{k=-M}^{N}W_k\lambda^k\right\}
	\end{equation} is a Poisson submanifold of $\mathfrak{sl}_{n+1}(\C)((\lambda^{-1}))$. Therefore the space 
	\begin{equation}
		\left\{W(\lambda)\in \mathfrak{sl}_{n+1}(\C)((\lambda^{-1}))|	W(\lambda)=\Lambda(\lambda)+W_0+\sum_{k=1}^{m}\frac{W_k}{\lambda^k}\right\}
	\end{equation} 
	is an intersection of Poisson submanifolds  \eqref{TruancatedPoissonSub} and the space of $W(\lambda)$ given in \eqref{WlambdaLuarent}, so itself is also a Poisson submanifold.
	Therefore the ODE system \eqref{ODE} can be restricted to this subspace, and becomes a finite dimensional hamiltonian system. The solutions and $\tau$-functions of this ODE system will be discussed in the subsequent subsections.

	\subsection{Spectral curve and the line bundle of eigenvector}
In this subsection and the next, for a $W(\lambda)$ given by \eqref{WlambdaLaurentPoly}, we take the change:
\begin{equation}\label{changeW}
	W(\lambda)\mapsto \lambda^mW(\lambda)\,.
\end{equation}	
Recall that the corresponding spectral curve $C$ is defined in \eqref{SpectralCurve}, which is of the form
	\begin{equation}\label{algebraicCurveDef}
		C:\,S(\lambda,\mu)=\mu^{n+1}-\lambda^{m(n+1)+1}+\sum_{k=1}^{n-1}\sum_{j=0}^{m(n+1-k)}s_{k,j}\lambda^j\mu^{k}=0\,.
	\end{equation}
	This curve is the so-called $(n+1,m(n+1)+1)$-curve; see \cite{BL}.
 Here and below we suppose it has no singularities on its affine part.
 
 For a fixed $\lambda$, let $\mu_1(\lambda), \dots, \mu_{n+1}(\lambda)$ be $n+1$ roots of equation \eqref{algebraicCurveDef}. As $\lambda\rightarrow\infty$, they satisfy following  formula:
	\begin{equation}\label{muAsym}
		\mu_k(\lambda)=e^{2k\pi\sqrt{-1}/(n+1)}\lambda^{m+\frac{1}{n+1}}+o\left(\lambda^{m+\frac{1}{n+1}}\right)\,, \quad k=1,\dots, n+1\,.
	\end{equation}
	So the cover $C\xrightarrow{\lambda}\mathbb{P}^1$ is ramified at a point $\infty_C$ above the infinity in $\mathbb{P}^1$, and the ramification index of $\lambda$ at $\infty_C$ is $n+1$. Around $\infty_C$ we have this parameterization:

	\begin{equation}\label{InfiLocalCor}
		\left\{\begin{aligned}
			&\frac{1}{\lambda}=z^{n+1}\,,\\
			&\frac{\mu}{\lambda^{m}}=\frac{1}{z}+O(1)\,,\quad z\rightarrow0\,.
		\end{aligned}\right.
 	\end{equation}
	Using Riemann--Hurwitz formula we can prove the genus of $C$ is  $g=\frac{mn(n+1)}{2}$.

	For a point $P=(\lambda,\mu)\in C$, we can associate it with an eigenvctor~$\boldsymbol{\psi}(P)\in\mathbb{C}^{n+1}$of $W(\lambda)$:
	\begin{equation}\label{eignenvec}
		W(\lambda)\boldsymbol{\psi}(P)=\mu \boldsymbol{\psi}(P)\,.
	\end{equation}
	If $P$ is not a ramification point, then $\boldsymbol{\psi}(P)$ lies in an one-dimensional subspace. When $P$ is a ramification point, let $\zeta$ be a local coordinate around $P$ and $\zeta(P)=0$. Then the meromorphic eigenvector $\bm \psi(\zeta)$ can be written as 
	\begin{equation}\label{psi_r}
		\bm \psi(\zeta)=\zeta^d\tilde{\bm \psi}(\zeta)\,, \quad \tilde{\bm\psi}(0)\in \C^{n+1}\backslash\bm 0\,.
	\end{equation}
Let $\tilde{\bm \psi}(0)$ be the eigenvector associated to $P\in C$. This complete the definition of the line budle $\mathcal{L}$ of $W(\lambda)$'s eigenvector on $C$. The integer $d$ in \eqref{psi_r} is the order of the meromorphic section $\bm \psi(\zeta)$ at $P$.

	Let $P=(\lambda_0,\mu_0)\in C\backslash \infty_C$ be a ramification point. Denote by $s$ ramification index of $\lambda$ at $P$. Thus we can choose local coordinate $\zeta=(\lambda-\lambda_0)^{\frac{1}{s}}$. Let $\boldsymbol{\psi}(\zeta)$ be a local section around $P$:
	\begin{equation}\label{RamiEigenVec}
		W(\zeta^s+\lambda_0)\boldsymbol{\psi}(\zeta)=\mu(\zeta)\boldsymbol{\psi}(\zeta)\,, \quad {\rm ord}_P\boldsymbol{\psi}(\zeta)=0\,.
	\end{equation} 
Since $P$ is not a singular point, we have the following expansion as $\zeta\rightarrow0$:
	\begin{equation}
		\mu(\zeta)=\mu_0+\mu'(0)\zeta+o(\zeta)\,,\quad \mu'(0)\neq0\,.
	\end{equation}
	Now differentiate~\eqref{RamiEigenVec} $k$ times at the point $\zeta=0$, ($k<s$). There is 
	\begin{equation}\label{W-muRecurRelation}
		W(\lambda_0)\boldsymbol{\psi}^{(k)}(0)=\sum_{l=0}^{k}\binom{k}{l}\mu^{(k-l)}(0)\boldsymbol{\psi}^{(l)}(0)\,.
	\end{equation}
	Here $*^{(k)}$ means taking $k$-th derivative with respect to $\zeta$. 
	\begin{lem}\label{lemRootspace}
		The root space of $W(\lambda_0)$ with eigenvalue $\mu_0$ has basis:
		\begin{equation}
			\boldsymbol{\psi}^{(0)}(0),\dots,\boldsymbol{\psi}^{(s-1)}(0)\,.
		\end{equation}
		Here the derivative $*^{(k)}$ is with respect to local coordinate $\zeta=(\lambda-\lambda_0)^{\frac{1}{s}}$ around $P$.
	\end{lem}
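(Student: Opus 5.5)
The plan has three steps: show that $\mu_0$ has algebraic multiplicity exactly $s$ in $W(\lambda_0)$, show that the vectors $\boldsymbol{\psi}^{(k)}(0)$ lie in the root space, and show that they are linearly independent. A basis then follows by counting.

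\textbf{Multiplicity.} Near $\lambda_0$ the characteristic polynomial factors as
\begin{equation*}
S(\lambda,\mu)=\prod_{j=1}^{n+1}\bigl(\mu-\mu_j(\lambda)\bigr).
\end{equation*}
Since $P$ is a smooth point with ramification index $s$ for $\lambda$, exactly $s$ of the branches $\mu_j(\lambda)$ pass through $\mu_0$ at $\lambda_0$. These are the $s$ conjugate Puiseux branches
\begin{equation*}
\mu(\epsilon^j\zeta)\,,\quad \zeta=(\lambda-\lambda_0)^{1/s}\,,\quad \epsilon=e^{2\pi\sqrt{-1}/s}\,.
\end{equation*}
No other point of $C$ over $\lambda_0$ can have $\mu=\mu_0$, because such a point would be a second branch through $(\lambda_0,\mu_0)$, and $P$ is nonsingular. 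Hence $\mu_0$ is a root of $S(\lambda_0,\cdot)$ of multiplicity exactly $s$, so the root space of $W(\lambda_0)$ for $\mu_0$ has dimension $s$.

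\textbf{Containment.} Rewrite \eqref{W-muRecurRelation} as
\begin{equation*}
\bigl(W(\lambda_0)-\mu_0\bigr)\boldsymbol{\psi}^{(k)}(0)=\sum_{l=0}^{k-1}\binom{k}{l}\mu^{(k-l)}(0)\,\boldsymbol{\psi}^{(l)}(0)\,.
\end{equation*}
The right-hand side involves only lower-index vectors. Note that $W(\zeta^s+\lambda_0)$ has no $\zeta^k$ terms for $0<k<s$, which is why \eqref{W-muRecurRelation} holds only for $k<s$. By induction on $k$, $(W(\lambda_0)-\mu_0)^{k+1}\boldsymbol{\psi}^{(k)}(0)=0$. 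So all $\boldsymbol{\psi}^{(k)}(0)$ with $k<s$ lie in the root space.

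\textbf{Independence.} This is the main step. The identity above says $W(\lambda_0)-\mu_0$ acts on $\mathrm{span}\{\boldsymbol{\psi}^{(0)}(0),\dots,\boldsymbol{\psi}^{(s-1)}(0)\}$ by an upper triangular formula. Its subdiagonal-type coefficient is $k\mu'(0)\neq 0$, the coefficient of $\boldsymbol{\psi}^{(k-1)}(0)$ in the image of $\boldsymbol{\psi}^{(k)}(0)$.

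I argue by induction that $\boldsymbol{\psi}^{(0)}(0),\dots,\boldsymbol{\psi}^{(k)}(0)$ are independent, with the sharper statement
\begin{equation*}
(W(\lambda_0)-\mu_0)^{k}\boldsymbol{\psi}^{(k)}(0)=k!\,\mu'(0)^k\,\boldsymbol{\psi}^{(0)}(0)\neq 0\,.
\end{equation*}
This holds because every other term drops in index by at least two for one application, hence is killed earlier by the nilpotency already established. The nonvanishing uses $\boldsymbol{\psi}(0)\neq 0$, which holds since ${\rm ord}_P\boldsymbol{\psi}=0$.

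Now suppose $\sum c_l\boldsymbol{\psi}^{(l)}(0)=0$, and let $k$ be the largest index with $c_k\neq0$. Apply $(W(\lambda_0)-\mu_0)^{k}$. Every term with $l<k$ is annihilated by the containment step, so we get $c_k\,k!\,\mu'(0)^k\boldsymbol{\psi}^{(0)}(0)=0$, a contradiction. Thus the vectors form a Jordan chain of length $s$.

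Having $s$ independent vectors in an $s$-dimensional root space gives the basis. The obstacle is the multiplicity count, namely that no other sheet of $C$ meets $(\lambda_0,\mu_0)$. This is exactly where the smoothness assumption on the affine part of $C$ is used.
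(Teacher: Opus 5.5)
Your proposal is correct and follows essentially the same route as the paper: containment in the root space via \eqref{W-muRecurRelation}, then independence by applying $(W(\lambda_0)-\mu_0)^M$ to a relation whose top nonzero coefficient is $c_M$, which leaves $c_M M!\,\mu'(0)^M\boldsymbol{\psi}(0)=0$. Your additions are two justifications the paper leaves implicit: the $s$-dimensionality of the root space, which you derive from smoothness of $P$ while the paper only asserts it, and the identity $(W(\lambda_0)-\mu_0)^k\boldsymbol{\psi}^{(k)}(0)=k!\,\mu'(0)^k\boldsymbol{\psi}^{(0)}(0)$, which you prove explicitly.
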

	\begin{proof}
		The root space of $W(\lambda_0)$'s eigenvalue $\mu_0$ is $s$-dimensional, and we see from \eqref{W-muRecurRelation} that  $\boldsymbol{\psi}^{(0)}(0),\dots,\boldsymbol{\psi}^{(s-1)}(0)$ belong to this space. So it is sufficient to prove that they are linearly independent.
		Suppose we have a linear relation 
		\begin{equation}
			\sum_{k=0}^{s-1}c_k\bm \psi^{(k)}(0)=\bm 0\,.
		\end{equation}
		Let $M$ be the biggest number such that $c_M\neq 0$. Because \eqref{W-muRecurRelation}, we have 
		\begin{equation}
			(W(\lambda_0)-\mu_0)^{M}\left(\sum_{k=0}^{s-1}c_k\bm \psi^{(k)}(0)\right)=c_MM!(\mu'(0))^M\bm \psi(0)=\bm 0\,.
		\end{equation}
		This result shows $c_M=0$, because $\mu'(0)\neq0$ and $\bm \psi(0)\neq \bm 0$. Therefore we get a contradiction, which means $c_k=0$, $k=0,\dots,s-1$. So $\boldsymbol{\psi}^{(0)}(0),\dots,\boldsymbol{\psi}^{(s-1)}(0)$ are linearly independent. Hence the lemma is proved.
	\end{proof}

Now consider a local section $\bm{\psi}(z)=(\psi_1(z),\dots,\psi_{n+1}(z))^T$ around $\infty_C$. Recall that $z=\lambda^{-1/(n+1)}$.  One can verify that 
\begin{equation}\label{expandPsiInfinity}
	\psi_k(z)=z^{d-k+1}+o\left(z^{d-k+1}\right)\,,\quad k=1,\dots,n+1\,.
\end{equation}
Here $d$ equals to ${\rm ord}_{\infty_C}\bm \psi(z)+n$.

Recall that $\mathcal{L}$ is the line bundle of the eigenvector of $W(\lambda)$. We can use the method in \cite[Chapter 5]{Babelon} to show  that the degree of $\mathcal{L}$ is $-g-n$.

	Now fix an eigenvector $\bm{\psi}(P)$ by condition:
	\begin{equation}\label{nomalizedPsi}
		\bm{\psi}(P)=(1,\psi_2(P),\dots,\psi_{n+1}(P))^{T}\,,\quad P\in C\,.
	\end{equation}
	Here and below we call this $\bm{\psi}(P)$ the {\it normalized eigenvector of W($\lambda$).}
	Its expansion at $\infty_C$ is
	\begin{equation}\label{normPsiExpandsion}
		\psi_i(z)=z^{-i+1}+o\left(z^{-i+1}\right)\,,\quad z\rightarrow 0\,, \quad i=1,\dots n+1\,.
	\end{equation}
	So this eigenvector $\bm \psi(P)$ has an $n$-th order pole at $\infty_C$, and rest of $g$ poles $Q_1,\dots,Q_g$ on the affine part of $C$.
Define the divisor $\mathcal{D}$ by
\begin{equation}\label{DefDivisorD}
	\mathcal{D}:=Q_1+\dots+Q_g\,.
\end{equation}
	
It can be proved that $l(\mathcal{D})=\dim H^0(C,\mathcal{O}(\mathcal{D}))=1$, which means that $\mathcal{D}$ is a non-special divisor.

		Let $G$ be a lower triangular matrix, and the diagonal entries are all $1$. We call $W(\lambda)\mapsto GW(\lambda)G^{-1}$  a gauge transformation of $W(\lambda)$. It is clear that $C$ and $\mathcal{D}$ does not change under gauge transformations.

	Up to now, for a $W(\lambda)$ as the form \eqref{WlambdaLaurentPoly}, we have associated it to its spectral curve $C$ and a degree $g$ non-special divisor $\mathcal{D}=Q_1+\dots+Q_g$. Conversely, given the data $C$ and $\mathcal{D}$, we can determine $W(\lambda)$ up to a gauge transformation.

	\begin{lem}\label{reConLem}
		Let $C$ be an algebraic curve given by \eqref{algebraicCurveDef}. Let  $\mathcal{D}=Q_1+\dots+Q_g$ be a non-special divisor on C, and $Q_1,\dots,Q_g \neq \infty_C$. Let  $G=(g_{ij})$ be a $(n+1)\times (n+1)$ lower triangular matrix, such that $g_{ii}=1$, $i=1,\dots,n+1$. Then there exist a unique matrix  $W(\lambda)$ of the form \eqref{WlambdaLaurentPoly}, whose spectral curve is $C$. And the normalized eigenvector $\bm \psi(P)$
		has poles at $Q_1,\dots,Q_g$, while the Laurent expansion of $\bm{\psi}(P)$ at $\infty_C$ is 
		\begin{equation}\label{psiExpansion2}
			\psi_i(z)=\sum_{k=1}^{n+1}\frac{g_{ik}}{z^{k-1}}+O\left(z\right)\,.
		\end{equation}
	\end{lem}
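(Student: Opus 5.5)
Here $W(\lambda)$ is understood after the change \eqref{changeW}, so $\mu$ is an eigenvalue of $\lambda^mW(\lambda)$.

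The plan is the standard Krichever-type reconstruction: build the normalized eigenvector $\bm\psi(P)$ first from $(C,\mathcal D,G)$, and then define $W(\lambda)$ as the matrix of multiplication by $\mu$ in the basis of its components.

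\textbf{Step 1: the function spaces.} For each $i=1,\dots,n+1$ let $\mathcal F_i$ be the space of meromorphic functions $f$ on $C$ with poles only at $Q_1,\dots,Q_g$ (at most simple at each) and at $\infty_C$, with
\begin{equation*}
f(z)=\sum_{k=1}^{n+1} g_{ik}z^{-(k-1)}+O(z)
\end{equation*}
at $\infty_C$. The divisor $\mathcal D+n\infty_C$ has degree $g+n$ and is non-special, since $\mathcal D$ is non-special and adding points at $\infty_C$ preserves non-speciality. By Riemann--Roch, $l(\mathcal D+n\infty_C)=n+1$. The $n+1$ principal-part-plus-constant coefficients at $\infty_C$ (orders $z^{-n},\dots,z^{0}$) define a linear map $H^0(\mathcal O(\mathcal D+n\infty_C))\to\C^{n+1}$. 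Its kernel is $H^0(\mathcal O(\mathcal D-\infty_C))$, which is $0$ because $l(\mathcal D)=1$ and its only element is a constant. Hence this map is an isomorphism. So there is a unique $\psi_i$ with the prescribed expansion \eqref{psiExpansion2}, and $\psi_1\equiv 1$ because $g_{11}=1$ and $g_{1k}=0$ for $k>1$. This gives existence and uniqueness of $\bm\psi$.

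\textbf{Step 2: defining $W(\lambda)$.} Consider $\mu\psi_i$. It has poles on $\mathcal D$ and, by \eqref{InfiLocalCor}, a pole of order at most $m(n+1)+i$ at $\infty_C$. I would show that it lies in the span of $\lambda^j\psi_k$ with polynomial coefficients in $\lambda$. Concretely, consider the $\C[\lambda]$-module $M$ of functions with poles bounded by $\mathcal D$ (in multiples) and arbitrary pole order at $\infty_C$. Using that $\lambda$ has pole order exactly $n+1$ at $\infty_C$ and that $\psi_1,\dots,\psi_{n+1}$ have pole orders $0,\dots,n$, which hit every residue class mod $n+1$, a dimension count via Riemann--Roch shows that $M$ is free over $\C[\lambda]$ with basis $\psi_1,\dots,\psi_{n+1}$. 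Hence $\mu\psi_i=\sum_k \widetilde W_{ik}(\lambda)\psi_k$ for a unique polynomial matrix $\widetilde W(\lambda)$. Set $W(\lambda):=\lambda^{-m}\widetilde W(\lambda)$, so that $\widetilde W\bm\psi=\mu\bm\psi$; the transpose convention is fixed by the statement's choice of eigenvector. The characteristic polynomial of $\widetilde W$ has $\mu$ as a root at every point of $C$, and it has degree $n+1$ in $\mu$. Since $S$ is irreducible, it follows that $\det(\mu-\widetilde W)=S(\lambda,\mu)$. Tracelessness follows from the vanishing of the $\mu^n$ coefficient of $S$.

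\textbf{Step 3: the shape \eqref{WlambdaLaurentPoly}.} This is the step I expect to be the main obstacle. One has to check that $\widetilde W(\lambda)=\lambda^m\Lambda(\lambda)+\lambda^m W_0+\dots+W_m$, with $W_0$ lower triangular and leading part $\Lambda$. I would compare pole orders in the principal gradation \eqref{PrincipalGrad}. The vector $\bm\psi$ behaves like $G(1,z^{-1},\dots,z^{-n})^T$, and $\mu\sim\lambda^m z^{-1}$. Since $\Lambda(\lambda)$ acts on $(1,z^{-1},\dots,z^{-n})^T$ exactly as multiplication by $z^{-1}$, the leading-order comparison gives the top term $G\Lambda G^{-1}$ up to lower degree. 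Degree bounds on the entries $\widetilde W_{ik}$, obtained by counting pole orders of $\mu\psi_i$ against those of $\lambda^j\psi_k$, force entries of principal degree greater than $1$ to vanish. The degree-$1$ part must then be $\Lambda$, and degree $\le 0$ at order $\lambda^m$ gives a lower triangular $W_0$. Conjugating by $G$ stays within this class, since $G$ is unipotent lower triangular.

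\textbf{Uniqueness.} If $W(\lambda)$ has these properties, its normalized eigenvector satisfies the conditions characterizing $\bm\psi$ in Step 1, so it equals $\bm\psi$. Then $W$ is determined by the relation $\mu\bm\psi=\widetilde W\bm\psi$, because $\psi_1,\dots,\psi_{n+1}$ are $\C(\lambda)$-linearly independent.
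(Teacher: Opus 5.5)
Your proposal is correct. Step 1 and the uniqueness argument agree with the paper's: Riemann--Roch for $\mathcal{D}+n\infty_C$, then $W$ is recovered from $\mu\bm\psi=\lambda^mW(\lambda)\bm\psi$. Where you differ is in how $W(\lambda)$ is produced.

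The paper forms the sheet-symmetric matrix $\Psi(\lambda)\hat\mu(\lambda)\Psi(\lambda)^{-1}$, which is rational in $\lambda$, and takes its polynomial part at $\lambda=\infty$. It then kills the defect $W(\lambda)\bm\psi-\mu\bm\psi$, because its components lie in $L(\mathcal{D}-\infty_C)=0$. This identifies $W=\Psi\hat\mu\Psi^{-1}$ explicitly, so $\det(\mu-W)=\prod_j(\mu-\mu_j)=S(\lambda,\mu)$ and uniqueness follow at once, and it links directly to the projectors $\Pi_i$ used later. The price is that the shape \eqref{WlambdaLaurentPoly} is only asserted ``from the expansions''.

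You instead show that $\bigcup_N L(\mathcal{D}+N\infty_C)$ is a free $\C[\lambda]$-module on $\psi_1,\dots,\psi_{n+1}$. This holds because $l(\mathcal{D}+N\infty_C)=N+1$ for every $N$, and the $\lambda^j\psi_k$ realize each pole order $(n+1)j+k-1$ exactly once. So $\mu\bm\psi=\widetilde W\bm\psi$ with $\widetilde W$ polynomial from the start. To identify the characteristic polynomial you then need irreducibility of $S$. That holds because the $n+1$ sheets are cyclically permuted around the single point $\infty_C$; alternatively, use the independence of $\bm\psi(P_1),\dots,\bm\psi(P_{n+1})$ over a generic $\lambda$.

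The payoff is that your Step 3, which you flag as the obstacle, becomes bookkeeping:
\begin{itemize}
\item The exact pole order $m(n+1)+i$ of $\mu\psi_i$ gives $(n+1)(\deg\widetilde W_{ik}-m)+k-i\le 1$. So every term of $W=\lambda^{-m}\widetilde W$ has principal degree at most $1$.
\item Equality holds for exactly one $k$ in each row: $k=i+1$ with degree $m$ for $i\le n$, and $k=1$ with degree $m+1$ for $i=n+1$.
\item The leading coefficient there is $1$, by $\mu/\lambda^m=z^{-1}+O(1)$ and $g_{ii}=1$.
\end{itemize}
Hence the degree-one part is exactly $\Lambda(\lambda)$. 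The remainder of the $\lambda^m$-coefficient of $\widetilde W$ is then lower triangular, including vanishing $(i,i+1)$ entries. The detour through $G\Lambda G^{-1}$ is not needed.

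One minor point: write ``poles bounded by $\mathcal{D}$'' rather than ``at most simple at each'', in case the $Q_j$ repeat.
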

	\begin{proof}
		 Now because $\mathcal{D}$ is a non-special divisor, using the Riemann--Roch Theorem we conclude that 
		\begin{equation}
			l(\mathcal{D}+k\infty_C)=k+1\,,\quad k=0,\dots,n+1\,.
		\end{equation}
		Therefore we can choose $\psi_1,\dots,\psi_{n+1}\in L(\mathcal{D}+n\infty_C)$ such that satisfy \eqref{psiExpansion2}. Here $L(\mathcal{D}+n\infty_C):=\left\{f|(f)+\mathcal{D}+n\infty_C\geq 0\right\}$. Obviously $\psi_1(P)=1$. Denote $\bm{\psi}=(\psi_1,\dots,\psi_{n+1})^{T}$.
		For any $\lambda\in \mathbb{P}^1$, let $P_1=(\lambda, \mu_1),\dots,P_{n+1}=(\lambda,\mu_{n+1})$ be $n+1$ points on $C$ over $\lambda$. Define the matrix:
		\begin{equation}
			\Psi(\lambda):=(\bm{\psi}(P_1),\dots,\bm{\psi}(P_{n+1}))\,.
		\end{equation}
		Let $\hat{\mu}(\lambda)={\rm diag}(\mu_1,\dots,\mu_{n+1})$. The matrix-value function $\Psi(\lambda)\hat{\mu}(\lambda)\Psi(\lambda)^{-1}$ is independent of the order of $P_1,\dots,P_{n+1}$, so is a well-defined meromorphic function of $\lambda$ on $\mathbb{P}^1$. Take the Laurent expansion of $\Psi(\lambda)\hat{\mu}(\lambda)\Psi(\lambda)^{-1}$ at $\infty$:
		\begin{equation}
			\Psi(\lambda)\hat{\mu}(\lambda)\Psi(\lambda)^{-1}=\sum_{k=0}^{N}W_k\lambda^k+O(\lambda^{-1})=W(\lambda)+O(\lambda^{-1})\,.
		\end{equation}
		Here $W(\lambda)$ is the polynomial part in the expansion. Now let us consider a vector-value meromorphic function on $C$ 
		\begin{equation}
			\tilde{\bm{\psi}}(P)=W(\lambda)\bm{\psi}(P)-\mu(P)\bm{\psi}(P)\,,\quad P=(\lambda,\mu)\in C\,.
		\end{equation}
		The vector function $\tilde{\bm{\psi}}(P)$ could only has poles at $Q_1,\dots,Q_g$, and its limit at $\infty_C$ equals to $\bm{0}$. Because $\mathcal{D}$ is a non-special divisor, we conclude that $\tilde{\bm{\psi}}=\bm{0}$. So $C$ is the spectral curve of $W(\lambda)$ and $\bm \psi (P)$ is the eigenvector of $W(\lambda)$. From expansion \eqref{muAsym} and \eqref{psiExpansion2} we conclude that $W(\lambda)$ is of the form \eqref{WlambdaLaurentPoly}. The uniqueness of $W(\lambda)$ can be deduced from the formula 
	$	W(\lambda)=\Psi(\lambda)\hat{\mu}(\lambda)\Psi(\lambda)^{-1}$.
	\end{proof}
	One can repeat the above construction to the dual line bundle $\mathcal{L}^\dagger$ of the dual eigenvector of $W(\lambda)$:
	\begin{equation}
		\bm{\psi}^\dagger(P) W(\lambda)=\mu\bm{\psi}^\dagger(P)\,.
	\end{equation}
	Consider the normalized dual eigenvector.
	\begin{equation}\label{dualEigenvec}
		\bm{\psi}^\dagger(P)=(\psi^\dagger_1,\dots,\psi^\dagger_{n},1)\,.
	\end{equation}
	Similar to the above argument, we have
	\begin{lem}\label{lemDivisorDagger}
		The normalized dual eigenvector $\bm \psi^\dagger$ has expansion at $\infty_C$:
		\begin{equation}\label{dualPsiExpansion}
			\psi^{\dagger}_j(z)=z^{-n-1+j}+o\left(z^{-n-1+j}\right)\,,\quad j=1,\dots,n+1\,.
		\end{equation}
		And $\bm \psi^\dagger$ has $g$ poles  $Q_1^\dagger,\dots,Q_{g}^\dagger$ on the affine part of $C$. The divisor \begin{equation}\label{defDivsorDagger}
			\mathcal{D}^\dagger=Q_1^\dagger+\cdots+Q_{g}^\dagger
		\end{equation} is also a non-special divisor.
	\end{lem}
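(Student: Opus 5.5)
I will establish the three claims of Lemma \ref{lemDivisorDagger} by dualizing the treatment of $\bm\psi$. The transpose $W(\lambda)^T$ has the same spectral curve $C$, and $\bm\psi^\dagger(P)^T$ is an eigenvector of $W(\lambda)^T$. Conjugating by the antidiagonal permutation matrix $J$ ($J_{i,n+2-i}=1$) sends $\Lambda(\lambda)^T=\lambda E_{1,n+1}+\sum_k E_{k+1,k}$ to $\lambda E_{n+1,1}+\sum_k E_{k,k+1}=\Lambda(\lambda)$. It also sends the lower triangular $W_0^T$ (transpose of lower triangular is upper triangular, and $J$ reverses this) back to a lower triangular matrix, and it keeps the terms $W_k/\lambda^k$ of the same shape. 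Hence $\widetilde W(\lambda):=JW(\lambda)^TJ$ is again of the form \eqref{WlambdaLaurentPoly}, with the same spectral curve $C$. Its normalized eigenvector in the sense of \eqref{nomalizedPsi} is exactly $J\bm\psi^\dagger(P)^T=(1,\psi^\dagger_n,\dots,\psi^\dagger_1)^T$, because the last entry of $\bm\psi^\dagger$ is normalized to $1$.

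Applying \eqref{normPsiExpandsion} to $\widetilde W$ gives $\widetilde\psi_i(z)=z^{-i+1}+o(z^{-i+1})$. With $\widetilde\psi_i=\psi^\dagger_{n+2-i}$, i.e.\ $j=n+2-i$, this reads $\psi^\dagger_j(z)=z^{-n-1+j}+o(z^{-n-1+j})$, which is \eqref{dualPsiExpansion}. To be safe I would check \eqref{normPsiExpandsion} directly for the dual equation, since the paper only says ``one can verify''. The equation $\bm\psi^\dagger(\lambda^mW)=\mu\bm\psi^\dagger$ with leading part $\lambda^m\Lambda$ gives the recursion $\lambda^m\psi^\dagger_{k-1}\approx\mu\psi^\dagger_k$ from columns $k\ge2$, and $\lambda^{m+1}\psi^\dagger_{n+1}\approx\mu\psi^\dagger_1$ from column $1$. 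Using $\mu/\lambda^m=z^{-1}+O(1)$ from \eqref{InfiLocalCor}, this forces $\psi^\dagger_{k-1}\sim z^{-1}\psi^\dagger_k$, and with $\psi^\dagger_{n+1}=1$ we get $\psi^\dagger_j\sim z^{j-n-1}$. The lower order corrections only affect the $o(\cdot)$ terms, because $W_0$ is lower triangular and so has degree $\le 0$ in the principal gradation \eqref{PrincipalGrad}.

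Next, the pole count. By the same degree statement applied to $\widetilde W$, namely that the eigenvector bundle of $\widetilde W$ has degree $-g-n$, the normalized section has total pole order $g+n$. Of this, order $n$ sits at $\infty_C$ (coming from $\psi^\dagger_1\sim z^{-n}$), so exactly $g$ poles $Q_1^\dagger,\dots,Q_g^\dagger$ (with multiplicity) lie in the affine part.

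Finally, non-speciality: I would transfer the paper's argument that $l(\mathcal D)=1$ verbatim to $\widetilde W$. This is the step I expect to be the real obstacle, since that argument is only asserted earlier. If I had to prove it, I would argue by contradiction. Suppose $f\in L(\mathcal D^\dagger)$ is nonconstant. Then $f\bm\psi^\dagger$ would be a second section with the same behaviour at $\infty_C$, and subtracting a suitable multiple of $\bm\psi^\dagger$ yields a nonzero dual eigenvector section with fewer poles. To rule this out, I would combine it with $\bm\psi$ via the pairing $\bm\psi^\dagger(P)\bm\psi(P)$, whose divisor is known from the standard identity $\bm\psi^\dagger\bm\psi\,d\lambda/S_\mu$ (Babelon's method). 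The zeros of that pairing are controlled by the branch points, and the degree count $\deg\mathcal L+\deg\mathcal L^\dagger=-(\text{ramification divisor degree})$ would force $\bm\psi$ to have more poles than allowed, giving the contradiction.
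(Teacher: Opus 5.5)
Your argument is correct, and it takes a different and cleaner route than the paper. The paper gives no separate proof. It only says that the construction for $\bm\psi$ can be repeated for row eigenvectors: the asymptotics \eqref{normPsiExpandsion}, the degree $-g-n$ via Babelon's method, and $l(\mathcal D)=1$. You instead transport the lemma through the involution $W\mapsto\widetilde W=JW^TJ$, and your checks hold:
\begin{itemize}
\item $J\Lambda(\lambda)^TJ=\Lambda(\lambda)$;
\item $JW_0^TJ$ is again lower triangular and traceless;
\item $\det(\mu I_{n+1}-\lambda^m\widetilde W(\lambda))=S(\lambda,\mu)$;
\item $J\bm\psi^\dagger(P)^T$ is the normalized eigenvector of $\widetilde W$ at the same point $P\in C$, so its affine pole divisor is literally $\mathcal D^\dagger$.
\end{itemize}
All three assertions then become the already-stated facts about $\bm\psi$ applied to $\widetilde W$. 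Nothing has to be redone for row vectors, and any proof of the $\bm\psi$-statements automatically covers $\bm\psi^\dagger$. Your direct check of \eqref{dualPsiExpansion} is also sound. To make it rigorous, set $D=\mathrm{diag}(1,z^{-1},\dots,z^{-n})$. For $W$ as in \eqref{WlambdaLaurentPoly} (before the rescaling \eqref{changeW}) one has $zD^{-1}W(\lambda)D=\Lambda(1)+O(z)$, and $\Lambda(1)$ has simple spectrum.

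What does not work is your fallback sketch of non-specialty. For $f\in L(\mathcal D^\dagger)$, the product $f\bm\psi^\dagger$ can have double poles at the points of $\mathcal D^\dagger$. So $(f-f(\infty_C))\bm\psi^\dagger$ has smaller pole order only at $\infty_C$; it does not have ``fewer poles'', and the degree count you invoke gives no contradiction. You do not need this step, since the paper itself only asserts $l(\mathcal D)=1$ and your reduction transfers that assertion.

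If you want an actual proof, here is one for the generic case: $\mathcal D$ reduced, disjoint from $\mathcal D^\dagger$ and from the ramification points, with no two of its points over the same $\lambda$.
\begin{enumerate}
\item Take $f\in L(\mathcal D)$ with $f(\infty_C)=0$. Since $e_1^T\Psi(\lambda)=(1,\dots,1)$, set $F(\lambda)=e_1^T\sum_{l}f(P_l)\Pi(P_l)=(f(P_1),\dots,f(P_{n+1}))\Psi(\lambda)^{-1}$.
\item $F$ is single-valued in $\lambda$.
\item $F$ is regular at branch points, because it is the sheet sum of the locally holomorphic differential $f\Phi\,d\lambda/S_\mu$, divided by $d\lambda$.
\item $F$ is regular at $\lambda(Q_k)$. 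The first row of $\Pi$ is $\bm\psi^\dagger/(\bm\psi^\dagger\bm\psi)$. The residue of $\bm\psi$ at $Q_k$ is an eigenvector for the simple eigenvalue $\mu_k$, so it pairs nontrivially with $\bm\psi^\dagger(Q_k)$. Hence $\bm\psi^\dagger\bm\psi$ has a simple pole at $Q_k$, which cancels the pole of $f$.
\item $F=O(z)$ at infinity, because $\Pi_{1j}=O(z^{j-1})$ by \eqref{OmegaExpand}.
\item By Liouville $F\equiv 0$, so $f\equiv 0$ and $l(\mathcal D)=1$. Your involution then gives $l(\mathcal D^\dagger)=1$.
\end{enumerate}
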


		\subsection{$\theta$-function formulas}
 	We will derive formulas that relate $W(\lambda)$ to $\theta$-functions in this subsection.
 	First we prove the following generalization of \cite[Lemma 2.6]{Du20AlgCur}.
 	\begin{pro}\label{proSpecProjector}
 		Suppose  $\lambda\in\mathbb{P}^1$ is not a branch point, and $P_i(\lambda,\mu_i)$, $i=1,\dots,n=1$ are $n+1$ distinct points on $C$ over $\lambda$. Denote $\Pi_i(\lambda):=\Pi(\lambda,\mu_i)$, then we have
 		\begin{equation}
 			\Pi_i\cdot\Pi_j=\delta_{i,j}\Pi_i\,,\quad \sum_{i=1}^{n+1}\Pi_i(\lambda)=I_{n+1}\,,\quad\sum_{i=1}^{n+1}\mu_i\Pi_i(\lambda)=W(\lambda)\,.
 		\end{equation}
 	\end{pro}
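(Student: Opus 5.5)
The plan is to regard $\Phi(P)$ as a polynomial in the matrix $W=W(\lambda)$ (after the change \eqref{changeW}) and then to diagonalize $W(\lambda)$. Fix $\lambda$ not a branch point. Then the characteristic polynomial $S(\lambda,\cdot)$ has the $n+1$ simple roots $\mu_1,\dots,\mu_{n+1}$, so $W(\lambda)$ is diagonalizable:
\[
W(\lambda)=\Psi(\lambda)\hat\mu(\lambda)\Psi(\lambda)^{-1},\qquad \Psi(\lambda)=(\bm\psi(P_1),\dots,\bm\psi(P_{n+1})),
\]
with notation as in the proof of Lemma \ref{reConLem}. Write $S(\lambda,x)=\prod_k(x-\mu_k)$. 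Since $S(\lambda,x)-S(\lambda,\mu_i)$ is divisible by $x-\mu_i$, the quotient in \eqref{DefPhiP} is a genuine polynomial in $x$. Substituting $x=W$ and using $S(\lambda,\mu_i)=0$, we get
\[
\Phi(P_i)=\prod_{k\neq i}\bigl(W(\lambda)-\mu_k\bigr).
\]

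Next conjugate by $\Psi$. This gives
\[
\Psi^{-1}\Phi(P_i)\Psi=\prod_{k\neq i}(\hat\mu-\mu_k)=\prod_{k\neq i}(\mu_i-\mu_k)\,E_{ii}.
\]
We also have $S_\mu(\lambda,\mu_i)=\prod_{k\neq i}(\mu_i-\mu_k)$, which is nonzero because the roots are distinct. Therefore
\[
\Pi_i(\lambda)=\Psi(\lambda)E_{ii}\Psi(\lambda)^{-1},
\]
which is the spectral projector onto the eigenline $\bm\psi(P_i)$ along the remaining eigenvectors.

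All three identities then follow at once from $E_{ii}E_{jj}=\delta_{ij}E_{ii}$, $\sum_i E_{ii}=I$ and $\sum_i\mu_iE_{ii}=\hat\mu$, after conjugating back by $\Psi$.

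The only step needing care is justifying that $W(\lambda)$ is diagonalizable with eigenvalues exactly the $\mu_i$. This is immediate because $\lambda$ is not a branch point, so the $\mu_i$ are distinct. Alternatively, one can avoid $\Psi$ altogether and use Lagrange interpolation: $\Pi_i=\ell_i(W)$, where $\ell_i$ is the Lagrange basis polynomial at the nodes $\mu_k$. Then the identities $\sum_i\ell_i=1$, $\sum_i\mu_i\ell_i(x)=x$ and $\ell_i\ell_j\equiv\delta_{ij}\ell_i \pmod{S(\lambda,\cdot)}$, combined with Cayley--Hamilton ($S(\lambda,W)=0$), give the result directly.
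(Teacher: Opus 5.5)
Your proof is correct and is essentially the standard linear-algebra argument that the paper defers to by citing \cite{Du20AlgCur}: you reduce $\Phi(P_i)$ to $\prod_{k\neq i}(W(\lambda)-\mu_k)$, diagonalize $W(\lambda)$ using the distinctness of the $\mu_i$ to get $\Pi_i=\Psi E_{ii}\Psi^{-1}$ (or, equivalently, use Lagrange interpolation together with Cayley--Hamilton). One cosmetic fix: take the columns of $\Psi$ to be arbitrary eigenvectors rather than the normalized $\bm\psi(P_i)$, since the latter is undefined if some $P_i$ happens to be one of the poles $Q_j$.
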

 	\begin{proof}
 		By using the same method as that in  \cite{Du20AlgCur}.
 	\end{proof}
 	
 	The above proposition means that $\Pi_i$, $i=1,\dots,n+1$ are projectors that send column vectors in $\C^{n+1}$ to $\ker(\mu_iI-W(\lambda))$. Now we can express the basic matrix resolvents of $W(\lambda)$ by $\Pi_i$, $i=1,\dots,n+1$. When $\lambda\rightarrow\infty$, the $n+1$ points $P_1,\dots,P_{n+1}$ have coordinates
 	\begin{equation}\label{cordinatePj}
 		z(P_j)=e^{\frac{2\pi j\sqrt{-1}}{n+1}}\lambda^{\frac{1}{n+1}}\,,\quad j=1,\dots,n+1\,.
 	\end{equation}
 	By equation \eqref{DecomW}, \eqref{HamiltonianDef} and \eqref{DefMatResolvent}, we know that $R_a(\lambda)$ have the same eigenvectors as $W(\lambda)$ does. Let $\bm \psi(z)$ be the eigenvector of $W(\lambda)$ defined on a small neighbourhood of $\infty_C$. Then the following equation holds
 	\begin{equation}
 		R_a(\lambda)\bm \psi(z)=z^{-a}\bm \psi(z)\,,\quad a=1,\dots n\,.
 	\end{equation}
 	So $R_a(\lambda)$ can be expressed by $\Pi_{j}$:
 	\begin{equation}\label{RaPij}
 		R_a(\lambda)=\sum_{j=1}^{n+1}z(P_j)^{-a}\Pi_{j}\,.
 	\end{equation}
 	 When $P\rightarrow\infty_C$,
 	by equation \eqref{cordinatePj} and \eqref{RaPij}, the expansion of $\Pi(P)$ is as follows:
 	\begin{equation}\label{OmegaAndR}
 		\Pi(z)=\frac{1}{n+1}\left(\sum_{a=1}^{n}z^aR_a(\lambda)+I_{n+1}\right)\,.
 	\end{equation}

 	Let $\Omega(P):=\Pi(P)d\lambda$. By equation \eqref{mPointCorrelator} we have the following expansion for $N\geq 2$.
 		\begin{align}\label{OmegazFakm}
 			&	-\frac{1}{N}\sum_{s\in S_N}\frac{\Tr\bigl(\Omega(z_{s_1})\dots\Omega(z_{s_N})\bigr)}{\prod_{i=1}^{N}(\lambda_{s_{i+1}}-\lambda_{s_{i}})}-\delta_{N,2}\frac{dz_1dz_2}{(z_1-z_2)^2}=\nonumber\\
 			&\sum_{a_1,\dots,a_N=1}^{n}\sum_{l_1,\dots,l_N=0}^{\infty}F_{a_1,l_1;\dots; a_N,l_N}z_1^{a_1+(n+1)l_1-1}\dots z_N^{a_N+(n+1)l_N-1}dz_1\dots dz_N\,.
 		\end{align}

 	Now we need to express entries of $\Omega(P)$ by the spectral curve $C$ and the divisor $\mathcal{D}$. 

Because $\Phi(P)$ is a polynomial of $W(\lambda)$ and $\mu$, it must be holomorphic on $C\backslash\infty_C$.  So the entries of $\Omega(P)$ are holomorphic on $C\backslash\infty_C$. 	Let $\bm{\psi}$, $\bm\psi^\dagger$ be the normalized eigenvector and dual eigenvector. Proposition \ref{proSpecProjector} tells us 
\begin{equation}\label{Omega_ij}
	\Omega_{ij}=\frac{\psi_i\psi^\dagger_j }{\bm{\psi}^\dagger\bm{\psi}}d\lambda\,.
\end{equation}
 Now using formula \eqref{normPsiExpandsion} and \eqref{dualPsiExpansion} we get  
 		\begin{equation}\label{OmegaExpand}
 			\Omega_{ij}(z)=-(1+o(z))\frac{dz}{z^{n+2+i-j}}\,.
 		\end{equation}

 	From the above equation we see that $\Omega_{1,n+1}$ has a double pole at $\infty_C$. From the definition $\Omega(P):=\Pi(P)d\lambda$ and Proposition \ref{proSpecProjector} we know that $\psi_j(P)=\frac{\Omega_{j,n+1}(P)}{\Omega_{1,n+1}(P)}$.
 	Because $\Omega_{ij}$ has no poles on the affine part of $C$, we conclude that $Q_1,\dots,Q_g$ must be zeros of $\Omega_{1,n+1}$. Similarily $Q_1^\dagger,\dots,Q_g^\dagger$ are also zeros of $\Omega_{1,n+1}$. So the $2g$ zeros of $\Omega_{1,n+1}$ are exactly $Q_1,\dots,Q_g, Q^\dagger_1,\dots,Q^\dagger_{g}$. This kind of differentials can be expressed by $\theta$-functions, and following are the details.

 	Recall that we have introduced
 	\begin{equation}
 		\bm{u}:=\mathcal{D}_{\mathcal{L}}-(n+1)\infty_C-\Delta=\mathcal{D}-\infty_C-\Delta\in J(C)\,
 	\end{equation} 
 	as a point in Jacobian corresponding to the line bundle $\mathcal{L}$.  Because the canonical divisor on $C$ is 
 	\begin{equation}
 		K_C=2\Delta=(\Omega^1_{n+1})=\D+\D^\dagger-2\infty_C\,.
 	\end{equation}
 	So we also have 
 	\begin{equation}
 		-\bm{u}=\mathcal{D}^\dagger-\infty-\Delta\,.
 	\end{equation}
 	
 An important theorem of Riemann (cf. \cite{Fay}) states:
 		Let $\bm u\in J(C)$, and let $\Delta$ be the Riemann divisor on $C$, then
 		\begin{enumerate}
 			\item If $\theta(\bm u)\neq 0$, then the function $\theta(P-\infty_C-\bm u)$ of $P\in C$ has $g$ zeros $P_1,\dots, P_g$. The divisor $\mathcal{A}=P_1+\dots+P_g$ is non-special and satisfies
 			\begin{equation}
 				\bm u=\mathcal{A}-\infty_C-\Delta\,.
 			\end{equation}
 			\item If $\theta(\bm u)=0$, then there exist an effective divisor $\mathcal{B}$ of degree $g-1$ such that
 			\begin{equation}
 				\bm u=\mathcal{B}-\Delta\,.
 			\end{equation}
 		\end{enumerate}
 	
 	Let $E(P,Q)$ be the prime-form, see \cite{Fay}.
 	\begin{equation}
 		E(P,Q)=\frac{\theta[\nu](Q-P)}{\sqrt{\sum\omega_i(P)\partial_{u_i}\theta[\nu](0)}\sqrt{\sum\omega_k(Q)\partial_{u_k}\theta[\nu](0)}}\,.
 	\end{equation}
 	Here $\nu$ is a nonsingular odd half-period. Let $\zeta$ be a local coordinate in a neighbourhood of a point $p\in C$, and $P,Q$ be two points in this neighbourhood. Then near $Q=P$, the prime form could be expressed as 
 	\begin{equation}\label{primeformExpansion}
 		E(P,Q)=\frac{\zeta(Q)-\zeta(P)}{\sqrt{d\zeta(P)d\zeta(Q)}}(1+S(p)(\zeta(Q)-\zeta(P))^2+\text{higher order terms})\,,
 	\end{equation}
 	where $S(p)$ is a constant determined by $p$.

 The following lemma is a generalization of \cite[Proposition 2.14]{Du20AlgCur}.
 	\begin{lem}\label{lemOmega1n+1}
 		The differential form $\Omega_{1,n+1}$ could be written as 
 		\begin{equation}\label{Omega1n+1}
 			\Omega_{1,n+1}(P)=-\frac{\theta(P-\infty_C+\bm{u})\theta(P-\infty_C-\bm{u})}{\theta(\bm{u})^2E(P,\infty_C)^2dz(\infty_C)}\,.
 		\end{equation}
 	\end{lem}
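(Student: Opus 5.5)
The approach is the standard one: both sides of \eqref{Omega1n+1} are meromorphic differentials on $C$, and a meromorphic differential is determined up to a constant by its divisor. So I will show that the ratio of the two sides is a holomorphic function on the compact curve $C$, hence constant, and then fix the constant by comparing leading terms at $\infty_C$.

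\textbf{Step 1: the right-hand side is a well-defined differential with the correct divisor.}
\begin{itemize}
\item[(a)] Single-valuedness. Under a $b_k$-cycle, each factor $\theta(P-\infty_C\pm\bm u)$ picks up the multiplier $\exp\bigl(-\pi\sqrt{-1}B_{kk}-2\pi\sqrt{-1}(\int_{\infty_C}^P\omega_k\pm u_k)\bigr)$. In the product the $\pm u_k$ contributions cancel. What remains, $\exp(-2\pi\sqrt{-1}B_{kk}-4\pi\sqrt{-1}\int_{\infty_C}^P\omega_k)$, is exactly the multiplier of $E(P,\infty_C)^2$ in $P$. Under $a$-cycles all factors are invariant. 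Hence the quotient is a single-valued meromorphic differential in $P$; it is a $1$-form because $E(P,\infty_C)^{-2}$ has weight one in $P$.
\item[(b)] Assume $\theta(\bm u)\neq0$. This holds because $\mathcal D$ is non-special, so $\bm u=\mathcal D-\infty_C-\Delta$ is not of the form $\mathcal B-\Delta$ with $\mathcal B\geq 0$ of degree $g-1$.
\item[(c)] Zeros. By Riemann's theorem quoted above, $\theta(P-\infty_C-\bm u)$ vanishes exactly at $\mathcal D=Q_1+\dots+Q_g$. Since $-\bm u=\mathcal D^\dagger-\infty_C-\Delta$ and $\mathcal D^\dagger$ is non-special (Lemma \ref{lemDivisorDagger}), $\theta(P-\infty_C+\bm u)$ vanishes exactly at $\mathcal D^\dagger$.
\item[(d)] Poles. $E(P,\infty_C)^2$ has a double zero at $P=\infty_C$ only. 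So the right-hand side has divisor $\mathcal D+\mathcal D^\dagger-2\infty_C$.
\end{itemize}

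\textbf{Step 2: the divisors agree.} From \eqref{OmegaExpand} with $i=1$, $j=n+1$, the form $\Omega_{1,n+1}$ has a double pole at $\infty_C$. The discussion preceding the lemma shows its zeros are exactly $Q_1,\dots,Q_g,Q_1^\dagger,\dots,Q_g^\dagger$. So $\Omega_{1,n+1}$ has the same divisor as the right-hand side, consistent with $K_C=\mathcal D+\mathcal D^\dagger-2\infty_C$. The ratio of the two sides is therefore holomorphic on $C$, hence a constant $c$.

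\textbf{Step 3: the constant.} Expand both sides in the local coordinate $z$ near $\infty_C$.
\begin{itemize}
\item By \eqref{primeformExpansion}, $E(P,\infty_C)^2\,dz(\infty_C)=z^2/dz\,(1+O(z^2))$.
\item The theta factors at $P=\infty_C$ give $\theta(\bm u)\theta(-\bm u)=\theta(\bm u)^2$, since $\theta$ is even.
\item So the right-hand side equals $-\dfrac{dz}{z^2}(1+O(z))$.
\item By \eqref{OmegaExpand}, $\Omega_{1,n+1}=-(1+o(z))\dfrac{dz}{z^{2}}$.
\end{itemize}
Comparing leading terms gives $c=1$.

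\textbf{Main obstacle.} The delicate part is the bookkeeping in Step 3: the correct normalization of the prime form at $\infty_C$ in the coordinate $z$, the sign conventions relating $dz$ to $d\lambda$, and checking that the leading coefficients match exactly with no stray root of unity. A secondary concern is a degenerate case where some $Q_i$ coincides with some $Q_j^\dagger$, or points repeat. This causes no problem, because the divisor statements in Steps 1 and 2 hold with multiplicity.
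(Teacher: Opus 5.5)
Your proposal is correct and follows the paper's own proof. You show the right-hand side is a well-defined differential with divisor $\mathcal D+\mathcal D^\dagger-2\infty_C$ (zeros from Riemann's theorem, double pole from the prime form), conclude its ratio with $\Omega_{1,n+1}$ is constant, and fix the constant from the common leading term $-dz/z^2$ at $\infty_C$. One small point: in Step 1(b), $\theta(\bm u)\neq 0$ also uses $\infty_C\notin{\rm supp}\,\mathcal D$ (because $l(\mathcal D)=1$ makes $\mathcal D$ the only effective divisor in its class), which holds since the $Q_i$ lie on the affine part.
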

 	\begin{proof}
 		From Riemann's theorem, we know that $\theta(\bm u)\neq 0$ because $\mathcal{D}$ is non-special. The $g$ zeros of $\theta(P-\infty_C-\bm u)$ are just $Q_1,\dots,Q_g$. Similarly, zeros of $\theta(P-\infty_C+\bm u)$ are $Q_1^\dagger,\dots,Q_g^\dagger$.
 		So we can verify that the right-hand side is a well-defined differential form having the same zeros and poles as $\Omega_{1,n+1}(P)$. Then by comparing the coefficients of expansion at $\infty_C$ we conclude the identity.
 	\end{proof}
 	 Meromorphic functions in $L(\mathcal{D}+n\infty_C)$ are given in the following lemma.
 	\begin{lem}\label{meromorphicWaveFuc}
 		Let 
 		\begin{equation}
 			\phi_i(P)=\frac{\left.\partial_{z(Q)}^{i-1}\frac{\theta(P-Q-\bm{u})}{\theta(\bm{u})E(P,Q)\sqrt{d z(Q)}}\right|_{Q=\infty_C}}{(i-1)!\frac{\theta(P-\infty_C-\bm{u})}{\theta(\bm{u})E(P,\infty_C)\sqrt{d z(\infty_C)}}}\,, \quad i=1,\dots,n+1\,.
 		\end{equation}
 		 Then $\phi_i(P)$ are well-defined meromorphic functions on $C$, and are basis for $L(\mathcal{D}+n\infty_C)$. Moreover when $P$ is near $\infty_C$, we have
 		\begin{equation}\label{phiAsy}
 			\phi_i(z)=\frac{1}{z^{i-1}}+o\left(\frac{1}{z^{i-1}}\right)\,.
 		\end{equation}
 	\end{lem}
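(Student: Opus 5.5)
The plan is to view $\phi_i$ as a ratio of two sections of the same line bundle, which gives well-definedness at once, and then to read off poles and the expansion at $\infty_C$.

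\textbf{Numerator and denominator.} Put
\[
f(P,Q):=\frac{\theta(P-Q-\bm u)}{\theta(\bm u)E(P,Q)\sqrt{dz(Q)}}.
\]
In $Q$ this is, after dividing by the scalar $\sqrt{dz(Q)}$, a function of the local coordinate $z(Q)$ near $\infty_C$, so the Taylor coefficients $\partial^{i-1}_{z(Q)}f(P,Q)|_{Q=\infty_C}$ make sense. In $P$, $f$ is a multivalued section of weight $-1/2$. Going around $b_k$, both $\theta(P-Q-\bm u)$ and $E(P,Q)$ pick up exponential factors. The part of these factors that depends on $Q$ cancels against $E$'s, and what is left depends only on $P$ and $\bm u$. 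This leftover multiplier is therefore the same for every $Q$, so it survives differentiation in $z(Q)$. The denominator of $\phi_i$ is $f(P,\infty_C)$ and carries exactly the same multiplier. Hence $\phi_i$ is single-valued, and being a ratio of meromorphic objects it is a meromorphic function on $C$.

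\textbf{Poles.} Away from $\infty_C$, the denominator vanishes only at the zeros of $\theta(P-\infty_C-\bm u)$. Since $\theta(\bm u)\neq 0$, Riemann's theorem says these are exactly $Q_1,\dots,Q_g$, as in the proof of Lemma \ref{lemOmega1n+1}. The numerator is holomorphic for $P\neq\infty_C$: $E(P,Q)$ vanishes only at $P=Q$, and $Q$ is being sent to $\infty_C$. So on the affine part $(\phi_i)\geq -\mathcal{D}$.

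\textbf{Expansion at $\infty_C$.} This is the step needing care. With $z=z(P)$, $w=z(Q)$, expand using \eqref{primeformExpansion}:
\[
E(P,Q)^{-1}=\frac{\sqrt{dz\,dw}}{w-z}\,(1+O((w-z)^2)),
\qquad
\theta(P-Q-\bm u)=\theta(\bm u)+O(z-w).
\]
Then
\[
f=\frac{\sqrt{dz}}{w-z}\,(1+O(z)+O(w)).
\]
Differentiating $(i-1)$ times in $w$ at $w=0$ gives $(i-1)!\sqrt{dz}\,(-z)^{-i}(1+O(z))$, up to a sign convention coming from $E(P,Q)=-E(Q,P)$. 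The denominator behaves like $(i-1)!\sqrt{dz}\,(-z)^{-1}$. The ratio is $z^{1-i}(1+O(z))$, which is \eqref{phiAsy}, once the possible overall sign is fixed (it must give $+1$ at $i=1$, where $\phi_1=1$ identically). The main obstacle is keeping this bookkeeping consistent: the $\theta$ correction terms must really contribute only $o(z^{1-i})$. They do, because every $w$-derivative that hits the $\theta$ factor or the $(1+O((w-z)^2))$ factor instead of $1/(w-z)$ lowers the order of the pole.

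\textbf{Basis.} We now have $\phi_i\in L(\mathcal{D}+n\infty_C)$ with pole order exactly $i-1$ at $\infty_C$, so $\phi_1,\dots,\phi_{n+1}$ are linearly independent. By the Riemann--Roch computation in the proof of Lemma \ref{reConLem}, $l(\mathcal{D}+n\infty_C)=n+1$. Hence they form a basis.
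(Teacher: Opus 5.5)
Your strategy is the paper's: you get the affine poles from Riemann's theorem, the behaviour at $\infty_C$ from the prime-form expansion \eqref{primeformExpansion} and the Leibniz rule, and the basis from the count $l(\mathcal{D}+n\infty_C)=n+1$. Your quasi-periodicity argument for single-valuedness supplies what the paper only asserts. There is one harmless slip there: $1/E(P,Q)$ is a $+\tfrac12$-form in $P$, not $-\tfrac12$, but numerator and denominator share it.

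The expansion step, however, is wrong as written, and your proposed repair cannot work.

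\begin{itemize}
\item Write $f=\frac{\sqrt{dz}}{w-z}F(z,w)$ with $F$ holomorphic near $(0,0)$ and $F(0,0)=1$. The $(i-1)$-st $w$-derivative of $(w-z)^{-1}$ at $w=0$ is $(-1)^{i-1}(i-1)!\,(-z)^{-i}=-(i-1)!\,z^{-i}$.
\item You dropped the factor $(-1)^{i-1}$. Your numerator $(i-1)!\sqrt{dz}\,(-z)^{-i}$ over your (correct) denominator $(i-1)!\sqrt{dz}\,(-z)^{-1}$ therefore gives $(-1)^{i-1}z^{1-i}$. This contradicts \eqref{phiAsy} for even $i$; your stated ratio $z^{1-i}$ does not follow from your own formulas.
\item An $i$-dependent sign cannot be fixed by normalizing at $i=1$.
\item There is also no sign convention to appeal to. The factor $1/E(P,\cdot)$ occurs in both numerator and denominator, so any sign convention for $E$ cancels, and \eqref{primeformExpansion} fixes everything else.
\end{itemize}

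Done correctly, this is exactly the paper's Leibniz formula. The numerator is $-(i-1)!\,z^{-i}\sqrt{dz}\,(1+O(z))$ and the denominator is $-(i-1)!\,z^{-1}\sqrt{dz}\,(1+O(z))$. Hence $\phi_i=z^{1-i}(1+O(z))$ with no ambiguity.

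Your basis claim is unaffected by this. The leading coefficient $+1$ matters because the paper uses it right afterwards to conclude that $A$ in $\bm\psi=A\bm\phi$ is lower triangular with unit diagonal.
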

 	\begin{proof}
 		First we can verify that $\phi_i(P)$, $i=1,\dots,n+1$ are well-defined meromorphic functions on $C$. Because $\theta(P-\infty_C-\bm u)$  only has zeros at $Q_1,\dots, Q_g$, we know $(\phi_i(P))+ \mathcal{D}+N\infty_C\geq 0$. Here $N$ is an large positive integer. Now we consider $\phi_i(P)$ when $P $  is near $\infty_C$.
 		By the property of the prime-form \eqref{primeformExpansion}, we have
 		\begin{equation}\label{derivitive_phi}
 			\phi_i(z)=\frac{\sqrt{dz}}{(i-1)!\frac{\theta(P-\infty_C-\bm{u})}{\theta(\bm{u})E(P,\infty_C)\sqrt{d z(\infty_C)}}}\left.\partial_w^{i-1}\frac{\theta(P-Q-\bm u)}{\theta(\bm u)h(z,w)(w-z)}\right|_{Q=\infty_C}\,.
 		\end{equation}
 		Here $w:=z(Q)$, and 
 		\begin{equation}
 			E(z,w)=\frac{w-z}{\sqrt{dzdw}}h(z,w)\,,\quad h(z,w)=1+\text{higher order terms}\,.
 		\end{equation} 
 		So 
 		\begin{equation}
 			\frac{\theta(P-Q-\bm u)}{\theta(\bm u)h(z,w)}
 		\end{equation}
 		equals to $1$ when $z=w=0$, and is holomorphic when $z,w$ are small.
 		Because
 		\begin{equation}
 			\left.\partial_w^{i-1}\frac{\theta(P-Q-\bm u)}{\theta(\bm u)h(z,w)(w-z)}\right|_{w=0}=-\sum_{k=0}^{i-1}\binom{i-1}{k}\frac{(i-1-k)!}{z^{i-k}}\left.\partial_w^{k}\frac{\theta(P-Q-\bm u)}{\theta(\bm u)h(z,w)}\right|_{w=0}\,,
 		\end{equation}
 		we have following  from equation \eqref{derivitive_phi}
 		\begin{equation}
 			\phi_i(z)=\sum_{k=0}^{i-1}\frac{\left.-\partial_w^{k}\frac{\theta(P-Q-\bm u)}{\theta(\bm u)h(z,w)}\right|_{w=0}}{-k!\frac{\theta(P-\infty_C-\bm u)}{\theta(\bm u)h(z,0)z^{-1}}}\cdot \frac{1}{z^{i-k}}=\frac{1}{z^{i-1}}+o\left(\frac{1}{z^{i-1}}\right)\,.
 		\end{equation}
 		So equation \eqref{phiAsy} is proved.
 		Now it is easy to see that $\phi_i(P)\in L(\mathcal{D}+n\infty_C)$, $i=1,\dots,n+1$, and they form a basis of $L(\mathcal{D}+n\infty_C)$. 
 	\end{proof}
 	Similarly 
 	\begin{equation}
 		\phi_j^\dagger(P)=\frac{\left.\partial_{z(Q)}^{n+1-j}\frac{\theta(P-Q+\bm{u})}{\theta(\bm{u})E(P,Q)\sqrt{d z(Q)}}\right|_{Q=\infty_C}}{(n+1-j)!\frac{\theta(P-\infty_C+\bm{u})}{\theta(\bm{u})E(P,\infty_C)\sqrt{d z(\infty_C)}}}\,, \quad j=1,\dots,n+1\,,
 	\end{equation} 
 	are basis of the space $L(\mathcal{D}^\dagger+n\infty_C)$.
 	Now let us consider the inner product:
 	\begin{equation}		
 		\sum_{i=1}^{n+1}\phi_i(Q)\phi^\dagger_i(P)=\frac{\left.\partial^n_{z(R)}\frac{\theta(P-R+\bm{u})\theta(Q-R-\bm{u})}{\theta(\bm{u})^2E(P,R)E(Q,R)dz(R)}\right|_{R=\infty_C}}{n!\frac{\theta(P-\infty_C+\bm{u})\theta(Q-\infty_C-\bm{u})}{\theta(\bm{u})^2E(P,\infty_C)E(Q,\infty_C)dz(\infty_C)}}\,.
 	\end{equation}

 		Difine the differential 
 		\begin{equation}
 			H_{PQ}(R):=\lambda(R)\frac{\theta(P-R+\bm{u})\theta(Q-R-\bm{u})}{\theta(\bm{u})^2E(P,R)E(Q,R)}\,.
 		\end{equation}
 		It has poles at the point $R=P,Q,\infty_C$. The sum of its residues  is zero, so 
 		\begin{align}
 			{\res}_{R=\infty_C}H_{PQ}(R)&=-({\res}_{R=P}H_{PQ}(R)+{\res}_{R=Q}H_{PQ}(R))\nonumber\\
 			&=-\lambda(P)\cdot \frac{\theta(Q-P-\bm{u})}{\theta(\bm u)E(Q,P)}-\lambda(Q)\frac{\theta(P-Q+\bm{u})}{\theta(\bm u)E(P,Q)}\nonumber\\
 			&=(\lambda(P)-\lambda(Q))\frac{\theta(P-Q+\bm u)}{\theta(\bm u)E(P,Q)}\,.
 		\end{align}
 		Then we have
 		\begin{align}
 			&\frac{\theta(P-\infty_C+\bm{u})\theta(Q-\infty_C-\bm{u})}{\theta(\bm{u})^2E(P,\infty_C)E(Q,\infty_C)dz(\infty_C)}\sum_{i=1}^{n+1}\phi_i(Q)\phi^\dagger_i(P)\nonumber\\
 			=&\left.\frac{1}{n!}\partial^n_{z(R)}\frac{\theta(P-R+\bm{u})\theta(Q-R-\bm{u})}{\theta(\bm{u})^2E(P,R)E(Q,R)dz(Q)}\right|_{R=\infty_C}\nonumber\\
 			=&{\res}_{R=\infty_C}H_{PQ}(R)\nonumber\\
 			=&(\lambda(P)-\lambda(Q))\frac{\theta(P-Q+\bm u)}{\theta(\bm u)E(P,Q)}\,.\nonumber
 		\end{align}
 		Therefore we have the following interesting formula.
 		 \begin{equation}\label{innerProduct}
 			\frac{\theta(P-\infty_C+\bm{u})\theta(Q-\infty_C-\bm{u})}{\theta(\bm{u})^2E(P,\infty_C)E(Q,\infty_C)dz(\infty_C)}\sum_{i=1}^{n+1}\phi_i(Q)\phi^\dagger_i(P)=(\lambda(P)-\lambda(Q))\frac{\theta(P-Q+\bm u)}{\theta(\bm u)E(P,Q)}\,.
 		\end{equation}
 	Suppose $\lambda$ is not a branch point,  and let $P_1,\dots,P_{n+1}\in C$ be the $n+1$ distinct points over $\lambda$. Then we have
 	\begin{equation}\label{orthgnal}
 		\sum_{i=1}^{n+1}\phi_i(P_j)\phi^\dagger_i(P_k)=0\,,\quad j\neq k\,.
 	\end{equation}

 	Because $Q_1,\dots,Q_g$ are poles of the normalized eigenvector $\bm \psi(P)$, we have
 	\begin{equation}\label{psiaik}
 		\psi_i=\sum_{k=1}^{n+1}a_{ik}\phi_k\,\quad i=1,\dots,n+1\,.
 	\end{equation}
 	By \eqref{normPsiExpandsion} and \eqref{phiAsy} we know that $(a_{ik})_{1\leq i,k\leq n+1}$ is a lower triangular matrix and its diagnal elements are $1$.
 	Let $P=(\lambda,\mu)\in C$, the vector 
 	\begin{equation}
 		\bm{\phi}(P)=(\phi_1,\dots,\phi_{n+1})^T
 	\end{equation}
 	is an eigenvector of the matrix $A^{-1}W(\lambda)A$ with the eigenvalue $\mu$, where $A=(a_{ik})$. Identity \eqref{orthgnal} shows that 
 	\begin{equation}
 		\bm{\phi}^\dagger(P)=(\phi_1^\dagger,\dots,\phi_{n+1}^\dagger)
 	\end{equation}
 	is the dual eigenvector of $A^{-1}W(\lambda)A$. So
 	\begin{equation}\label{psidaggerAphi}
 		\bm{\psi}^\dagger A=\bm{\phi}^\dagger \,.
 	\end{equation}
 	And equation \eqref{psiaik} could be written as 
 	\begin{equation}\label{psiAphi}
 		\bm{\psi}=A\bm{\phi}\,.
 	\end{equation}
 	
 	The following proposition is the special case $N=2$ of Theorem \ref{mainCol}. This proposition generalizes \cite[Theorem 1.1]{Du20AlgCur}
	\begin{pro}\label{OmegaPro}
 		Let $P$, $Q$ be two distinct points on $C$, then 
 		\begin{equation}
 			\frac{{\Tr}(\Omega(P)\Omega(Q))}{(\lambda(P)-\lambda(Q))^2}=\frac{\theta(P-Q+\bm{u})\theta(P-Q-\bm{u})}{\theta(\bm{u})^2E(P,Q)^2}\,.
 		\end{equation}
 	\end{pro}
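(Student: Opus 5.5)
Using \eqref{Omega_ij}, we write $\Omega(P)=\frac{\bm\psi(P)\bm\psi^\dagger(P)}{\bm\psi^\dagger(P)\bm\psi(P)}d\lambda$. This is a rank-one matrix, so
\begin{equation*}
\Tr(\Omega(P)\Omega(Q))=\frac{(\bm\psi^\dagger(P)\bm\psi(Q))(\bm\psi^\dagger(Q)\bm\psi(P))}{(\bm\psi^\dagger(P)\bm\psi(P))(\bm\psi^\dagger(Q)\bm\psi(Q))}\,d\lambda(P)\,d\lambda(Q)\,.
\end{equation*}
By \eqref{psiAphi} and \eqref{psidaggerAphi} we have $\bm\psi^\dagger(P)\bm\psi(Q)=\bm\phi^\dagger(P)A^{-1}A\bm\phi(Q)=\sum_i\phi^\dagger_i(P)\phi_i(Q)$. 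The gauge matrix $A$ therefore drops out, and every inner product can be evaluated by the identity \eqref{innerProduct}. This gives
\begin{equation*}
\bm\phi^\dagger(P)\bm\phi(Q)=(\lambda(P)-\lambda(Q))\frac{\theta(P-Q+\bm u)}{\theta(\bm u)E(P,Q)}\cdot\frac{\theta(\bm u)^2E(P,\infty_C)E(Q,\infty_C)dz(\infty_C)}{\theta(P-\infty_C+\bm u)\theta(Q-\infty_C-\bm u)}\,,
\end{equation*}
and the same formula holds with $P\leftrightarrow Q$.

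For the denominators, we do not take the diagonal limit of \eqref{innerProduct} directly. Instead we use $\Omega_{1,n+1}=\psi_1\psi^\dagger_{n+1}d\lambda/(\bm\psi^\dagger\bm\psi)=d\lambda/(\bm\psi^\dagger\bm\psi)$, which holds by the normalizations \eqref{nomalizedPsi} and \eqref{dualEigenvec}. So $d\lambda(P)/(\bm\psi^\dagger(P)\bm\psi(P))=\Omega_{1,n+1}(P)$, and Lemma \ref{lemOmega1n+1} gives this quantity explicitly in terms of $\theta$. (As a consistency check, the diagonal limit of \eqref{innerProduct} gives the same result, since $(\lambda(P)-\lambda(Q))/E(P,Q)\to d\lambda$.)

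Substituting these, the left-hand side equals
\begin{equation*}
\frac{\Omega_{1,n+1}(P)\Omega_{1,n+1}(Q)}{(\lambda(P)-\lambda(Q))^2}\,(\bm\phi^\dagger(P)\bm\phi(Q))(\bm\phi^\dagger(Q)\bm\phi(P))\,.
\end{equation*}
The factors $(\lambda(P)-\lambda(Q))^2$ cancel. Using $E(Q,P)=-E(P,Q)$ and $\theta(Q-P+\bm u)=\theta(P-Q-\bm u)$ (evenness of $\theta$), the product of the two inner products becomes
\begin{equation*}
-\frac{\theta(P-Q+\bm u)\theta(P-Q-\bm u)}{\theta(\bm u)^2E(P,Q)^2}\cdot\frac{\theta(\bm u)^4E(P,\infty_C)^2E(Q,\infty_C)^2dz(\infty_C)^2}{\theta(P-\infty_C+\bm u)\theta(P-\infty_C-\bm u)\theta(Q-\infty_C+\bm u)\theta(Q-\infty_C-\bm u)}\,.
\end{equation*}
By Lemma \ref{lemOmega1n+1}, the second factor is exactly $\bigl(\Omega_{1,n+1}(P)\Omega_{1,n+1}(Q)\bigr)^{-1}$. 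The two minus signs (one here, one in \eqref{Omega1n+1}) combine to $+$, which gives the claimed formula.

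The main obstacle is bookkeeping rather than any new idea. One must track the signs and the half-differential factors $dz(\infty_C)$, $E(\cdot,\infty_C)$ consistently between \eqref{innerProduct} and \eqref{Omega1n+1}. One must also handle the sign conventions of the quasi-periodic factors, which should cancel because every $\theta(\cdot\pm\bm u)$ appears paired. If a constant discrepancy appeared, I would fix it by comparing leading terms as $P\to Q$: there both sides behave like $d\lambda(P)d\lambda(Q)/(\lambda(P)-\lambda(Q))^2$, since $\Tr\Omega(P)^2=d\lambda^2$ by $\Pi^2=\Pi$ and $\Tr\Pi=1$.
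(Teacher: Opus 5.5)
Your proposal is correct and follows the paper's route: the paper's sketched proof likewise reduces to $\bm\psi^\dagger(P)\bm\psi(Q)=\bm\phi^\dagger(P)\bm\phi(Q)$, evaluates the mixed pairings by \eqref{innerProduct} and the diagonal ones through $\Omega_{1,n+1}=d\lambda/(\bm\psi^\dagger\bm\psi)$ and Lemma~\ref{lemOmega1n+1}, and you have written out the details it leaves implicit. One sign slip, which does not change the final identity: the product of the two inner products carries no minus sign, because the sign of $\lambda(Q)-\lambda(P)$ cancels that of $E(Q,P)=-E(P,Q)$, and the minus sign in \eqref{Omega1n+1} enters squared, so your ``two minus signs combining'' is really two errors that cancel each other.
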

 	\begin{proof}
 		From equation \eqref{psidaggerAphi} and \eqref{psiAphi}, we have 
 		\begin{equation}\label{innerProductPsiPhi}
 			\bm{\psi}^\dagger(P)\bm{\psi}(Q)=\bm{\phi}^\dagger(P)\bm{\phi}(Q)\,.
 		\end{equation}
 	  The proposition can be proved (cf. e.g. \cite{Dubrovin2020}) by Lemma \ref{lemOmega1n+1} and equation \eqref{innerProduct}.
 	\end{proof}

 	Fay \cite{Fay} proved the identity:
 	\begin{equation}
 		\frac{\theta(P-Q+\bm{u})\theta(P-Q-\bm{u})}{\theta(\bm{u})^2E(P,Q)^2}=\sum_{j,k=1}^g\frac{\partial^2\log\theta(\bm u)}{\partial u_j\partial u_k}\omega_j(P)\omega_k(Q)+\omega(P,Q)\,.
 	\end{equation}
 	Combining Proposition \ref{OmegaPro}, we have
 	\begin{equation}\label{OmegaPQDDlogTheta}
 		\frac{{\Tr}(\Omega(P)\Omega(Q))}{(\lambda(P)-\lambda(Q))^2}=\sum_{j,k=1}^g\frac{\partial^2\log\theta(\bm u)}{\partial u_j\partial u_k}\omega_j(P)\omega_k(Q)+\omega(P,Q)\,.
 	\end{equation}

 	\subsection{Proof of Proposition \ref{ThmGDtau}}
 	In this subsection we are going to describe the $\tau$-functions of the solutions to  the ODE system \eqref{ODE} when restricted to $W(\lambda)$ given by \eqref{WlambdaLaurentPoly}.
 	
 	Recall that $t_{a+(n+1)k}:=T^a_k$. We use $W(\lambda,\bm t)$, $\bm t=(t_1,t_2,\dots)$ to denote a solution to \eqref{ODE}. Here we suppose $W(\lambda, \bm t)$ does not depend on $t_{(n+1)k}$, $k\geq 1$. 
 
 		In following discussion we use the notation:
 		\begin{equation}\label{DefMk}
 			M_{k}(\lambda):=(R_1(\lambda)^k)_+ \,,\quad k\geq1\, .
 		\end{equation}
 		Here and further we use $(*)_+$ to denote the polynomial part. Recall that $R_a(\lambda)=R_1(\lambda)^a$, $a=1,\dots,n$, and $R_1(\lambda)^{n+1}=\lambda I_{n+1}$.
 		By result in \cite{InfiODE} and the condition that $W(\lambda,\bm t)$ is independent of $t_{(n+1)k}$, $k\geq 1$, we get the following equations from the ODE \eqref{ODE}.
 		\begin{equation}\label{ODEham}
 			\frac{d W(\lambda,\bm t)}{d t_k}=[M_k(\lambda),W(\lambda,\bm t)]\,,\quad k\geq 1\,.
 		\end{equation}
 		Therefore the spectral curve $C$ of $W(\lambda,\bm t)$ stay unchanged under the flow $\frac{d}{dt_k}$.

 		Let $\bm \psi(P,\bm t)$ be the normalized eigenvector of $W(\lambda,\bm t)$. Consider the ODE system of $\bm v(P,\bm t)=(v^1(P,\bm t),\dots,v^{n+1}(P,\bm t))$:
 		\begin{equation}\label{ODEof_v}
 			\frac{d \bm v(P,\bm t)}{d t_k}=M_k(\lambda,\bm t)\bm v(P,\bm t)\,,\quad k\geq 1\,.
 		\end{equation}
 		In \cite{InfiODE} it has been proved that 
 		\begin{equation}
 			\frac{\partial M_k(\lambda,\bm t)}{\partial t_l}-\frac{\partial M_l(\lambda,\bm t)}{\partial t_k}+[M_k(\lambda,\bm t),M_{l}(\lambda, \bm t)]=0\,.
 		\end{equation}
 		So flows $\frac{d}{dt_k}$ in \eqref{ODEof_v} commute with each other. Therefore ODE system \eqref{ODEof_v} can be solved simutaneously. Suppose the initial value is 
 		\begin{equation}
 			\bm v(P,\bm 0)=\bm \psi(P,\bm 0)\,.
 		\end{equation}
 		Because 
 		\begin{equation}
 			\frac{d}{d t_k}\left(\left(W(\lambda,\bm t)-\mu\right)\bm v(P,\bm t)\right)=\bm 0\,.
 		\end{equation}
 		So $\bm v(P,\bm t)$ is an eigenvector of $W(\lambda,\bm t)$. We have
 		\begin{equation}\label{vPsi}
 			\bm v(P,\bm t)=v^1(P,\bm t)\bm \psi(P,\bm t)\,.
 		\end{equation}
 		Because $R_1(\lambda)$ has eigenvalue $z^{-1}$, for $k\geq 1$ we have 
 		\begin{equation}
 			(R_1(\lambda)^k)_+\bm \psi(P,\bm t)=z^{-k}\bm \psi(P,\bm t)-(R_1(\lambda)^k)_-\bm \psi(P,\bm t)\,.
 		\end{equation}
 		Therefore by \eqref{DefMk} we have
 		\begin{equation}
 			M_k(\lambda_,\bm t)\bm \psi(P,\bm t)=z^{-k}\bm \psi(P,\bm t)+o(1)\,,\quad P\rightarrow\infty_C\,, k\geq 1\,.
 		\end{equation}
 		
 		So when $P$ is near $\infty_C$ we have following expansion
 		\begin{equation}\label{vt_expand}
 			v^i(z,\bm t)= e^{\sum_{k\geq1}t_kz^{-k}}\left(\sum_{j=1}^{n+1}\frac{g_{ij}}{z^{j-1}}+O(z)\right)\,.
 		\end{equation}
 		In particular 
 		\begin{equation}
 			v^1(z,\bm t)=e^{\sum_{k\geq1}t_kz^{-k}}\left(1+O(z)\right).
 		\end{equation}
 		One can also verify $\bm v(P,\bm t)$ has only $g$ poles $Q_1(\bm 0), \dots, Q_g(\bm 0)$ on the affine part of $C$.
 			Let $\Omega^{(k)}$ be the normalized second kind differentials:
 		\begin{align}
 			&\Omega^{(k)}(z)=\frac{-k}{z^{k+1}}dz-\sum_{i\geq0}q_{i,k}z^{i-1}dz\,, \label{SecondDiff1}\\
 			&\oint_{a_i}\Omega^{(k)}=0\,, \quad \oint_{b_i}\Omega^{(k)}=V^{(k)}_i\,, i=1,\dots g\,.\label{SecondDiff2}
 		\end{align} 
 		The standard fact in the theory of Baker--Akhiezer shows that 
 		\begin{equation}
 			v^1(P,\bm t)=\exp\left(\sum_{k\geq 1}t_k\int_{\infty_C}^{P}\Omega^{(k)}\right)\frac{\theta\left(\int_{\infty_C}^{P}\bm \omega-\bm u+\sum_{k\geq 1}t_k\bm V^{(k)}\right)}{\theta\left(\int_{\infty_C}^{P}\bm \omega-\bm u\right)}\,.
 		\end{equation}
 Here the integral $\int_{\infty_C}^{P}\Omega^{(k)}$ is regarded as 
 	\begin{equation}
 		\int_{\infty_C}^{P}\Omega^{(k)}=\lim_{Q\rightarrow\infty_C}\int_{Q}^{P}\Omega^{(k)}+\frac{1}{z(Q)^k}\,.
 	\end{equation}
 		By Riemann's theorem, when $\bm u-\sum_{k\geq 0}t_k\bm V^{(k)}\in J(C)\backslash(\Theta)$, the function $v^1(P,\bm t)$ has $g$ zeros. By \eqref{vPsi}, the zeros are $Q_1(\bm t), \dots,Q_g(\bm t)$, and 
 		\begin{equation}\label{u_t}
 			\bm u(\bm t)=Q_1(\bm t)+\dots+Q_g(\bm t)-\infty_C-\Delta=\bm u-\sum_{k\geq 0}t_k\bm V^{(k)}\,.
 		\end{equation}
By \eqref{vt_expand} we get
 	\begin{align}	\psi_i(z,\bm t)
 		=&e^{\sum_{k\geq1}\big(t_kz^{-k}-t_k\int_{\infty_C}^{z}\Omega^{(k)}\big)}\frac{\theta(\int_{\infty_C}^{z}\bm \omega-\bm u)\left(\sum_{j=1}^{n+1}\frac{g_{ij}}{z^{j-1}}\right)}{\theta(\int_{\infty_C}^{z}\bm \omega-\bm u+\sum_{k\geq 1}t_k\bm V^{(k)})}+O(z)\label{psitExpand}\,,
 	\end{align}
 	where $g_{ij}$ are constants.

 		We note that the solution $W(\lambda,\bm t)$ to \eqref{ODE} exists whenever 
 		\begin{equation}
 			\bm u(\bm t)=\bm u-\sum_{k\geq 1}t_k\bm V^{(k)}\in J(C)\backslash(\Theta)\,.
 		\end{equation}
 		Here $(\Theta)$ is the notation of theta divisor in $J(C)$.

 	For a given spectral curve $C$, the evolution of $g$ poles can be described using their coordinates $(\lambda_i,\mu_i)$, $i=1,\dots,g$. The follwing equations can be viewed as analogue of the Dubrovin equations in \cite{Dubrovin1974}.
 	\begin{col}\label{proDuvrovinEq}
 		Let $\phi_j(\lambda,\mu)$, $j=1,\dots,g$ be $g$ different monomials of the form
 		\begin{equation}
 			\phi_j(\lambda,\mu)=\lambda^{ml-k}\mu^{n-l}\,, l=1,\dots,n\,,\quad k=1,\dots,ml\,.
 		\end{equation}
 		Then the differential forms $\alpha_j=\frac{\phi_j(\lambda,\mu)d\lambda}{S_\mu(\lambda,\mu)}$, $j=1,\dots,g$ are basis of the space of holomorphic differential forms on $C$. The evolution equations of poles $Q_j=(\lambda_j,\mu_j)$， $j=1,\dots,g$ of the normalized eigenvector are as follows.
 		\begin{equation}\label{DubrovinEquation}
 			\frac{d \bm \lambda^T}{d t_k}=A(\bm \lambda,\bm \mu)^{-1}\bm U^{(k)}\,, \quad k\geq1\,.
 		\end{equation}
 		Here $\bm \lambda=(\lambda_1,\dots,\lambda_g)$, $\bm \mu=(\mu_1,\dots,\mu_g)$. The entries of matrix $A(\bm \lambda,\bm \mu)$ are defined by
 		\begin{equation}
 			A(\bm \lambda,\bm \mu)_{i,j}:=\frac{\phi_i(\lambda_j,\mu_j)}{S_{\mu}(\lambda_j,\mu_j)}\,, \quad i,j=1,\dots,g\,.
 		\end{equation}
 		The vector $\bm U^{(k)}$ are given by
 		\begin{equation}\label{UkVec}
 			\bm U^{(k)}=(U^{(k)}_1,\dots,U^{(k)}_g)	^T\,,\quad U^{(k)}_j:={\res}_{\infty_C}z^{-k}\alpha_j\,.
 		\end{equation}
 	\end{col}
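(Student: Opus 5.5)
The corollary has two parts, and I would prove them in turn.

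\textbf{Part 1: the holomorphic differentials.} I will show that the $\alpha_j$ form a basis of holomorphic differentials on $C$.

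On the affine part, $d\lambda/S_\mu$ is holomorphic and nowhere vanishing, because $C$ is smooth there: at a point with $S_\mu=0$ we have $S_\lambda\neq 0$, and then $d\lambda/S_\mu=-d\mu/S_\lambda$. Hence each $\alpha_j$ is holomorphic on $C\backslash\infty_C$.

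At $\infty_C$ I use the local coordinate $z$ from \eqref{InfiLocalCor}:
\begin{itemize}
\item $\lambda=z^{-(n+1)}$ and $\mu\sim z^{-m(n+1)-1}$;
\item $d\lambda\sim z^{-n-2}dz$ and $S_\mu\sim(n+1)\mu^n\sim z^{-n(m(n+1)+1)}$.
\end{itemize}
Therefore
\[
\mathrm{ord}_{\infty_C}\alpha_j=-(n+1)(ml-k)-(m(n+1)+1)(n-l)-n-2+n(m(n+1)+1).
\]
This simplifies to $(n+1)k-2+l\geq 0$ for $k\geq1$, $l\geq1$, so each $\alpha_j$ is holomorphic at $\infty_C$ as well.

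The number of admissible monomials is $\sum_{l=1}^n ml=mn(n+1)/2=g$. They are linearly independent: the orders at $\infty_C$, namely $(n+1)k+l-2$ with $1\leq l\leq n$, are pairwise distinct, since distinct pairs $(k,l)$ give distinct values modulo $n+1$ or distinct quotients. Hence the $\alpha_j$ form a basis.

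\textbf{Part 2: the Dubrovin equations.} From \eqref{u_t}, the Abel map of $Q_1(\bm t)+\dots+Q_g(\bm t)$ moves linearly:
\[
\frac{d}{dt_k}\sum_j\int^{Q_j}\bm\omega=-\bm V^{(k)}.
\]
I rewrite this in the non-normalized basis $\alpha_i$. Write $\alpha_i=\sum_l c_{il}\omega_l$. From the expansion \eqref{expand_omega}, $\omega_l=-\sum_k V^{(k)}_l z^{k-1}dz$, so
\[
\res_{\infty_C}z^{-k}\omega_l=-V^{(k)}_l,
\]
and consequently
\[
\sum_l c_{il}(-V^{(k)}_l)=\res_{\infty_C}z^{-k}\alpha_i=U^{(k)}_i.
\]
On the other side,
\[
\frac{d}{dt_k}\sum_j\int^{Q_j}\alpha_i=\sum_j\frac{\phi_i(\lambda_j,\mu_j)}{S_\mu(\lambda_j,\mu_j)}\frac{d\lambda_j}{dt_k}=\bigl(A\,\tfrac{d\bm\lambda^T}{dt_k}\bigr)_i.
\]
Equating the two expressions gives $A\,d\bm\lambda^T/dt_k=\bm U^{(k)}$, which is \eqref{DubrovinEquation} once $A$ is inverted.

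\textbf{Main obstacle: invertibility of $A$.} The matrix $A$ is invertible when the $Q_j$ are distinct and not branch points, because $\mathcal{D}$ is non-special. Indeed, a vanishing combination $\sum_i c_i\phi_i(Q_j)/S_\mu(Q_j)=0$ for all $j$ would produce a holomorphic differential vanishing on $\mathcal{D}$, contradicting $i(\mathcal{D})=0$. In the degenerate cases (coinciding $Q_j$ or $Q_j$ at branch points) I would argue generically, or switch to the local coordinate at each point. I would also double-check the sign convention linking $\res z^{-k}\omega$ to $V^{(k)}$, since the overall sign of \eqref{DubrovinEquation} depends on it.
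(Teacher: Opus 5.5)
Your proof is correct and follows the paper's route for the Dubrovin equations: you differentiate the linear motion \eqref{u_t} of the Abel image of $\mathcal{D}$, use $\res_{\infty_C}z^{-k}\omega_l=-V^{(k)}_l$, and pass to the basis $\alpha_i$; your Part 1 and your invertibility argument for $A$ (via $i(\mathcal{D})=0$) fill in details the paper only asserts. One arithmetic slip: the order of $\alpha_j$ at $\infty_C$ simplifies to $(n+1)k+l-n-2=(n+1)(k-1)+(l-1)$, not $(n+1)k+l-2$ (your formula would give $d\lambda/S_\mu$ order $2g-2+n>2g-2$), but the correct values are still nonnegative and pairwise distinct, so your conclusion stands.
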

 	\begin{proof}
 		From \eqref{u_t} we have 
 		\begin{align}
 			\frac{	d \bm u}{dt_k}=\sum_{j=1}^{g}\bm \omega(Q_j)\frac{d Q_j}{dt_k}=-\bm V^{(k)}={\res}_{\infty_C}z^{-k}\bm\omega\,.
 		\end{align}
 		
 		Let $\bm \alpha=(\alpha_1,\dots,\alpha_g)^T$. We get
 		\begin{equation}
 			\sum_{j=1}^{g}\bm \alpha(Q_j)\frac{d Q_j}{dt_k}={\res}_{\infty_C}\left(z^{-k}\bm \alpha\right)=\bm U^{k}\,.
 		\end{equation}
 		So we have 
 		\begin{equation}
 			A(\bm \lambda,\bm \mu)\frac{d\bm \lambda^T}{d t_k}=\bm U^{(k)}\,.
 		\end{equation}
 		And the corollary is proved.
 	\end{proof}

		Now we can prove Proposition \ref{ThmGDtau}.

	\begin{proof}[Proof of Proposition \ref{ThmGDtau}]\label{ODEtau}
		Let $W(\lambda,\bm t)$ be the solution to \eqref{ODE}, whose initial value is $W(\lambda)$. 
		Combining \eqref{OmegaPQDDlogTheta} and \eqref{OmegazFakm} for $N=2$, we get 
		\begin{align}
			&\sum_{a,b=1}^{n}\sum_{k,l\geq0 }F_{a,k;b,l}(\bm t)z_1^{a+k(n+1)-1}z_2^{b+l(n+1)-1}dz_1dz_2\nonumber\\
			=&\sum_{i,j=1}^g\sum_{k,l\geq1}\frac{\partial^2\log\theta(\bm u(\bm t))}{\partial u_i\partial u_j}V^{(k)}_iV^{(l)}_jz_1^{k-1}z_2^{l-1}dz_1dz_2+\sum_{k,l\geq1}q_{k,l}z_1^{k-1}z_2^{l-1}dz_1dz_2 \,.\nonumber
		\end{align}
		Here we have used expansions \eqref{expand_omega} and \eqref{expand_bidiffer}. Comparing the coefficients of both sides of above equation, we have the following equations by \eqref{u_t}.
		\begin{align}
			F_{a,k;b,l}(\bm t)=&\sum_{i,j=1}^{g}V^{(a+k(n+1))}_iV^{(b+l(n+1))}_j\frac{\partial^2\log\theta(\sum_{k\geq 1}t_k\bm V^{(k)}-\bm u)}{\partial u_i\partial u_j}+q_{a+k(n+1),b+l(n+1)}\nonumber\\
			=&\frac{\partial^2 \log\theta(\sum_{k\geq 1}t_k\bm V^{(k)}-\bm u)}{\partial t_{a+(n+1)k}\partial t_{b+(n+1)l}}+q_{a+k(n+1),b+l(n+1)}\label{F_theta_t}\,, \quad a,b=1,\dots n\,,\: k,l\geq 0\,.
		\end{align}
		Because equations \eqref{SecondDiff1} and \eqref{SecondDiff2}, we have the fact that $\bm V^{((n+1)k)}=\bm 0, q_{k(n+1),l}=0$, $k,l\geq 0$. So the function $Z(\bm t)$ given by \eqref{GDtau} does not depend on $t_{(n+1)k}$, $k\geq 1$, and satisfies the definition \eqref{DefTau} of $\tau$-function.  
		The proposition is proved.
	\end{proof}

\begin{rmk}
	We remark that combining Krichever's method (cf.~\cite{Enolski2011,Nakayashiki01,SegalWilson1985}) and Lemma~\ref{reConLem} one can also obtain Proposition \ref{ThmGDtau}.
\end{rmk}

	\section{Applications }\label{secProof1}
In this section we will prove Theorem \ref{mainCol} and give some applications of our results.

				\begin{proof}[Proof of Theorem \ref{mainCol}]
						For the $\tau$-function given in Proposition \ref{ThmGDtau}, we have
					\begin{align}
						&\sum_{a_1,\dots,a_N=1}^{n}\sum_{l_1,\dots,l_N=0}^{\infty}F_{a_1,l_1;\dots; a_N,l_N}z_1^{a_1+(n+1)l_1-1}\dots z_N^{a_N+(n+1)l_N-1}dz_1\dots dz_N\nonumber\\	
						=&\sum_{j_1,\dots,j_N=1}^{g}\frac{\partial^N\log\theta(\bm u)}{\partial u_{j_1}\dots\partial u_{j_N}}\omega_{j_1}(z_1)\dots\omega_{j_N}(z_N)+\delta_{N,2}\left(\omega(z_1,z_2)-\frac{dz_1dz_2}{(z_1-z_2)^2}\right)\label{RHSofnPoinCorrelator}\,.\nonumber
					\end{align}
					Here we have used equation\eqref{expand_omega}, \eqref{expand_bidiffer} and the fact that
					\begin{equation}
						q_{k(n+1),l}=0\,,\quad \bm V^{(k(n+1))}=\bm 0\,,\quad k,l\geq 1\,.
					\end{equation}
					
					Combining  \eqref{OmegazFakm},  we have equation \eqref{multidifferential}.
					\end{proof}
					Now we give an application of Theorem \ref{mainCol}.

		Following \cite{Zhou2015}, for an arbitrary power series $\tau(\bm t)$, define 
	\begin{equation}
		G_{\tau,N}(\xi_1,\dots,\xi_N)d\xi_1\dots d\xi_N:=\sum_{k_1,\dots,k_N\geq 1}\left.\frac{\partial^N\log \tau(\bm t)}{\partial t_{k_1}\dots\partial t_{k_N}}\right|_{\bm t=\bm 0}\prod_{i=1}^{N}\frac{d\xi_i}{\xi_i^{k_i+1}}\,,\quad N\geq 1\,.
	\end{equation} 
According to \eqref{GDtau} and \eqref{omega_m}, the multi-differential $G_{Z,N}(\xi_1,\dots,\xi_N)d\xi_1\dots d\xi_N$ equals to
	\begin{equation}\label{Zt_NpointFunc}
		(-1)^N\omega_N(\xi_1,\dots,\xi_N)-\delta_{N,2}\frac{d\xi_1 d\xi_2}{(\xi_1-\xi_2)^2}\,,
	\end{equation}
where $\omega_N(\xi_1,\dots,\xi_N)$ is the local expansion of $\omega_N$ at $P_1=\dots=P_N=\infty_C$ with respect to $\xi_k:=z(P_k)^{-1}$, $k=1,\dots,N$.
	Let $\bm \psi^\dagger$, $\bm \psi$ be given by \eqref{nomalizedPsi} and \eqref{dualEigenvec}. Then by Theorem \ref{mainCol} we have
	\begin{equation}\label{omegaN_psi}
		\omega_N=-\frac{1}{N}\sum_{s\in S_N}\prod_{j=1}^{N}\frac{\bm \psi^\dagger(P_{s_{j+1}})\bm \psi(P_{s_j})}{\bm \psi^\dagger(P_{s_{j+1}})\bm \psi(P_{s_{j+1}})}\frac{d\lambda(P_{s_{j+1}})}{\lambda(P_{s_{j}})-\lambda(P_{s_{j+1}})}\,.
	\end{equation}
	By the similar arguments in the proof of Proposition \ref{OmegaPro}, we have the following formula:
\begin{equation}\label{BPQ}
	\frac{\bm \psi^\dagger(Q)\bm \psi(P)}{\bm \psi^\dagger(Q)\bm \psi(Q)}\frac{d\lambda(Q)}{\lambda(P)-\lambda(Q)}=\frac{\theta(Q-P+\bm u)}{\theta(\bm u)E(Q,P)}\frac{\theta(Q-\infty_C-\bm u)E(P,\infty_C)}{\theta(P-\infty_C-\bm u)E(Q,\infty_C)}\,.
\end{equation}
Now by \eqref{BPQ}, we have
\begin{equation}\label{multidi_theta}
		\omega_N=-\frac{1}{N}\sum_{s\in S_{N}}\prod_{j=1}^{N}B(P_{s_j},P_{s_{j+1}})\,, \quad N\geq 2\,,
\end{equation}
  where
\begin{equation}\label{DefBPP}
	B(P,Q):=\frac{\theta(Q-P+\bm u)}{\theta(\bm u)E(Q,P)}\,.
\end{equation}
This formula has been obtained for general Riemann surface in \cite{KNTY88_CFT}.
 Similar formula to \eqref{multidi_theta} in the case that the spectral curve is unramified at $\infty$ also appeared in \cite{Du20AlgCur}.
 It is easy to verify that $B(P,Q)$ has the following expansion at $P=Q=\infty_C$ with respect to $\xi:=z(P)^{-1}$, $\eta:=z(Q)^{-1}$:
 \begin{equation}\label{BkernelForm}
 	B(\xi,\eta)=\left\{
 	\begin{aligned}
 		&A(\xi,\eta)\sqrt{d\xi d\eta}\,, \quad &&i= j\,,\\
 		&\frac{\sqrt{d\xi d\eta}}{\xi-\eta}+A(\xi,\eta)\sqrt{d\xi d\eta}\,, \quad &&i\neq j\,,
 	\end{aligned}
 	\right.
 \end{equation}
 where
 \begin{equation}
 	A(\xi,\eta)=\sum_{i,j\geq0}A_{i,j}\xi^{-i-1}\eta^{-j-1}\,.
 \end{equation}
Then by Zhou's theorem \cite{Zhou2015,YangZhou25,Alexandrov2025}, equation \eqref{multidi_theta} implies Proposition \ref{ThmGDtau} in turn. Geometrically, $A_{i,j}$ are the affine coordinates of $Z(\bm t)$ (cf. \cite{Sato1983,Enolski2011,Zhou2013,BY2017}), also cf. \cite{Cafasso2018} for the analogues in simple Lie algebras.

					Following \cite{Du19}, for	
				\begin{equation}
				f(\bm t)=\sum_{n\geq 0}\frac{1}{n!}\sum_{i_1,\dots,i_n\geq 1}f_{i_1,\dots,i_n}t_{i_1}\dots t_{i_n}\,,
			\end{equation}
define the $d$-th truncation of this series by
	\begin{equation}
	[f(t)]_d=\sum_{n\geq 0}\frac{1}{n!}\sum_{i_1+\dots+i_n\leq d}f_{i_1,\dots,i_n}t_{i_1}\dots t_{i_n}\,.
\end{equation}
\begin{thm}\label{ApproximateThm}
Let $\tau(\bm t)\in \mathbb{C}[[\bm t]]$ be an arbitrary $\tau$-function of the Gelfand--Dickey hierarchy. Then  for any $d\geq1$, there exist an positive integer  $ m\leq [\frac{d+n-1}{n+1}]$, such that we can find an algebraic curve $C$ given by \eqref{algebraicCurveDef} and a $\bm u\in J(C)\backslash (\Theta)$, satisfying:

\begin{equation}
	[\log\tau(\bm t)]_{d}=[\log \theta(\sum_{k=1}^{d}t_k\bm V^{(k)}-\bm u)]_d+\alpha+\sum_{k=1}^{d}\beta_kt_k+\sum_{k_1+k_2\leq d}\gamma_{k_1,k_2}t_{k_1}t_{k_2}\,.
\end{equation}
Here $\alpha$, $\beta_k$ and $\gamma_{k_1,k_2}$ are suitable constants.

\end{thm}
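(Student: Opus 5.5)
Following Dubrovin's argument for KdV in \cite{Du19}, I would reduce everything to the ODE system \eqref{ODE} and the Taylor-coefficient formula \eqref{mtauPointCorrelator}.

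\textbf{Step 1: reduce to a matrix series.} By \cite{InfiODE}, every $\tau$-function of the Gelfand--Dickey hierarchy arises, up to the gauge $\tau\mapsto e^{\alpha+\sum\beta_kt_k+\sum\gamma_{k_1,k_2}t_{k_1}t_{k_2}}\tau$, from some solution of the infinite ODE system \eqref{ODE}. That solution has an initial value $W(\lambda)=\Lambda(\lambda)+W_0+\sum_{k\ge1}W_k\lambda^{-k}\in\mathfrak{sl}_{n+1}(\C)((\lambda^{-1}))$. I would take this for granted, as the paper does when it identifies the ODE $\tau$-functions with Gelfand--Dickey $\tau$-functions. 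The derivatives of $\log\tau$ of order $N\ge3$ at $\bm t=\bm 0$ are then given by \eqref{mtauPointCorrelator}. The second-order derivatives differ from $F_{a,k;b,l}$ only by the quadratic gauge freedom, and orders $0$ and $1$ are absorbed into $\alpha,\beta_k$. So it is enough to show that the coefficients of \eqref{mtauPointCorrelator} with $\sum_i(a_i+(n+1)k_i)\le d$ depend only on $W_0,\dots,W_m$ for a suitable $m$.

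\textbf{Step 2: degree count.} I would use the principal gradation \eqref{PrincipalGrad}. Recall $R_a(\lambda)=e^{{\rm ad}_{U}}\Lambda^a$, where $U$ is determined recursively from \eqref{DecomW}. The homogeneous component of $R_a$ of degree $a-j$ involves only those entries of $W$ with principal degree $\ge a-j$. In the coefficient of $\prod\lambda_i^{-k_i-1}$, the total degree is $\sum_i (a_i+(n+1)k_i)=\sum_i t$-indices $\le d$. The entry $(W_k)_{ij}$ has principal degree $j-i-(n+1)k\ge -n-(n+1)k$. Hence only entries with $(n+1)k+n\le d-1$ (roughly) can contribute, that is, $k\le[\frac{d+n-1}{n+1}]-1$. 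I would do this bookkeeping carefully to land exactly on $m=[\frac{d+n-1}{n+1}]$. Concretely, I expect the weight of $F_{a_1,k_1;\dots}$ as a graded polynomial in the entries to equal the sum of indices plus a shift, which forces the bound. Step 2 therefore reduces to checking that truncating $W$ after $W_m$ leaves all relevant $F$'s unchanged.

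\textbf{Step 3: genericity and theta-formula.} Replace $W$ by its truncation $\tilde W=\Lambda+\sum_{k=0}^m W_k\lambda^{-k}$. If needed, I would perturb it by terms $\epsilon_kW'_k\lambda^{-k}$ with $k\le m$ and zero effect on the relevant coefficients. I could not find a degree-neutral perturbation within the truncated family, so it is more likely to work to enlarge $m$ within the bound, or to use that smoothness of the affine part of \eqref{algebraicCurveDef} and non-speciality of $\D$ are Zariski-open conditions. The aim is that the spectral curve of $\lambda^m\tilde W$ is smooth and $\theta(\bm u)\ne0$. Then Proposition~\ref{ThmGDtau} says that $\tilde\tau=Z(\bm t)$ of \eqref{GDtau} is the $\tau$-function of $\tilde W$. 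Its logarithm is $\log\theta(\sum t_k\bm V^{(k)}-\bm u)$ plus a quadratic form, so the quadratic term $\frac12 q_{i,j}t_it_j$ is absorbed into $\gamma$. Truncating at degree $d$ and using $\bm V^{(k)}$, $k>d$, irrelevant gives the claim.

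\textbf{Main obstacle.} The hard step is the genericity in Step 3. The truncated $\tilde W$ is fixed by $\tau$, so I cannot freely perturb it without changing the low-order coefficients. My fallback is this: the relevant coefficients are polynomial (hence continuous) in the entries, while the conditions are open and dense. One could then take $m$ minimal but still allow extra free higher coefficients $W_{m'}$ with $m'\le[\frac{d+n-1}{n+1}]$ only when these do not enter. If the genus-zero/degenerate case persists, one would need a separate argument via limits of theta functions, and I expect this part to need the most care.
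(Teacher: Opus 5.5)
Your overall route is the paper's. You realize $\tau$ through a solution of the ODE system, show that the Taylor coefficients of $\log\tau$ of weight $\le d$ only see $W_0,\dots,W_m$, truncate, and apply Proposition~\ref{ThmGDtau}, absorbing $\frac12 q_{i,j}t_it_j$ into $\gamma$. But Step 2, which is the actual content of the proof, is left as an expectation, and the estimate you sketch goes the wrong way.

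The paper uses the extended gradation $\deg_e (W_k)_{ij}=1+i-j+k(n+1)$. It is chosen so that $W(\lambda)$ is homogeneous of degree $1$ when $E_{ij}$ and $\lambda$ carry the principal gradation \eqref{PrincipalGrad}. Then $U$, $H$ and $R_a$ in \eqref{DecomW} are homogeneous, and \eqref{DefTaustr}, \eqref{mPointCorrelator} give $\deg_e F_{a_1,k_1;\dots;a_N,k_N}=\sum_i\bigl(a_i+(n+1)k_i\bigr)$ exactly, with no shift. Two further facts are needed:
\begin{itemize}
\item Every generator has positive weight. This is where lower-triangularity of $W_0$ enters: $\deg_e(W_0)_{ij}=1+i-j\ge1$. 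Hence a polynomial of weight $\le d$ cannot contain a generator of weight $>d$.
\item What matters is the \emph{minimal} weight of an entry of $W_k$, namely $1-n+k(n+1)$, attained at $(W_k)_{1,n+1}$.
\end{itemize}
You used $j-i-(n+1)k\ge -n-(n+1)k$. That corresponds to the \emph{maximal} weight $1+n+k(n+1)$, attained at $(W_k)_{n+1,1}$, and it cannot exclude anything. Your intermediate claim that only $k\le[\frac{d+n-1}{n+1}]-1$ contributes is in fact false. In the paper's $n=2$ example, $F_{1,0;1,0}$ has weight $d=2$ and contains $u_{4,0}=(W_1)_{13}$, of weight $2$, whereas your bound would allow only $W_0$. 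With the minimal weight, $1-n+k(n+1)>d$ holds iff $k>[\frac{d+n-1}{n+1}]$, which is exactly the bound in the statement.

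On Step 3 the paper is no more careful than you are. It takes the smallest $m$ with $[W(\lambda)]_m=[W(\lambda)]_{[\frac{d+n-1}{n+1}]}$ and simply asserts that in general the spectral curve of $\lambda^m[W(\lambda)]_m$ is smooth on its affine part. Non-speciality of $\D$ and $\theta(\bm u)\neq0$ then come from Section~\ref{secTau}. So stating smoothness as a genericity assumption puts you at the paper's level of rigor.

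Your remark that no degree-neutral perturbation exists is wrong, however. For $M=[\frac{d+n-1}{n+1}]$ one has $M(n+1)\ge d-1$. So every strictly lower-triangular entry of $W_M$ has weight $1+i-j+M(n+1)\ge d+1$, and these entries can be changed freely without affecting any Taylor coefficient of order $\le d$. They give honest free parameters for a genericity argument, though showing that they generically smooth the curve would still require proof.
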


					\begin{proof}
						Define the extended gradation on the Poisson algebra $\mathcal{P}$:
						\begin{equation}
							\deg_e (W_k)_{ij}=1+i-j+k(n+1)\,,
						\end{equation}
						where $k>0$ or $k=0$ and $i\geq j$.
						By defining the extended gradation, we make $W(\lambda)$ to be homogenous:
						\begin{equation}
							\deg_e W(\lambda)=1\,.
						\end{equation}
						So if $f,g\in\mathcal{P}$ are homogenous polynomials in $\mathcal{P}$, the Poisson bracket of them is also homogenous:
						\begin{equation}
							\deg_e\{f,g\}=\deg_e f+\deg_e g-n-2\,.
						\end{equation}
						Under extened gradation, we have
						\begin{equation}
							\mathcal{P}=\bigoplus_{j=1}^{\infty}\mathcal{P}^{[j]}\,.
						\end{equation}
					Here $\mathcal{P}^{[j]}$ denotes the homogeneous components of degree $j$ under the extended gradation.
						By definition \eqref{DefTaustr}, we have
						\begin{equation}
							\deg_e F_{a,k;b,l}=a+k(n+1)+b+l(n+1)\,.
						\end{equation}
						And by \eqref{mPointCorrelator} we also have
						\begin{equation}
							\deg_eF_{a_1,k_1;\dots,a_m,k_m}=\sum_{j=1}^{m}a_j+(n+1)k_j\,.
						\end{equation}
						Because for $(W_k)_{ij}\in \mathcal{P}$ we have
						\begin{equation}
							\deg_e (W_k)_{ij}\geq 1-n+k(n+1)\,.
						\end{equation}
					Let 
					\begin{equation}
						\mathcal{P}_m:=\C[(W_k)_{i,j}|1\leq k\leq m\,,{\text{ or}}\, k=0,i\geq j]\,.
					\end{equation}
						So we have 
						\begin{equation}\label{gradP}
							\mathcal{P}^{[d]}\subset \mathcal{P}_m\,,\:\text{for all}\: m\geq\left[\frac{d+n-1}{n+1}\right]\,.
						\end{equation}

						According to \cite{InfiODE}, any $\tau$-function of the Gelfand--Dickey hierarchy is a $\tau$-function of the solution $W(\lambda,\bm t)$ to the equation \eqref{ODE}. Let $W(\lambda)$ be the ininitial value,  so the corresponding $\tau$-structures are
						\begin{equation}
							F_{a_1,k_1;\dots;a_l,k_l}=\left.\frac{\partial^l\log \tau(\bm t)}{\partial t_{a_1+(n+1)k_1}\dots \partial_{a_l+(n+1)k_l}}\right|_{\bm t=\bm 0}\,.
						\end{equation}
						Denote the truncated $W(\lambda)$ by 
						\begin{equation}
							[W(\lambda)]_m:=\sum_{i=0}^{m}\frac{W_i}{\lambda^i}+\Lambda(\lambda)\,.
						\end{equation}
						Choose the smallest $m$ such that 
						\begin{equation}
							[W(\lambda)]_m=[W(\lambda)]_{[\frac{d+n-1}{n+1}]}\,.
						\end{equation}
						Because \eqref{gradP}, if
						\begin{equation}
							\deg_e F_{a_1,k_1,\;\dots;a_l,k_l}=\sum_{i=1}^{l}a_i+(n+1)k_i\leq d\,,\quad l\geq 2\,,
						\end{equation}
						then the value of $F_{a_1,k_1,\;\dots;a_l,k_l}$ does not change when we replace $W(\lambda)$ by $[W(\lambda)]_{m}$.
						This means 
						\begin{equation}
							\left.\frac{\partial^l\log \tau(\bm t)}{\partial t_{k_1}\dots t_{k_l}}\right|_{\bm t=\bm 0}\,,\quad\sum {k_i}\leq d\,,l\geq2\,,
						\end{equation} 
						does not change when we replace $\tau(\bm t)$ by the $\tau$-function of the solution $W_m(\lambda,\bm t)$ who has initial value 
						\begin{equation}
							W_m(\lambda,\bm 0)=[W(\lambda)]_{m}\,.
						\end{equation}
						In general cases the spectral curve of $\lambda^m[W(\lambda)]_m$ has no singularities on its affine part, and the $\tau$-function of $W_m(\lambda, \bm t)$ can be given by \eqref{GDtau}.
						So Theorem \ref{ApproximateThm} is proved.
					\end{proof}
					
					\section{Example}\label{sec7}
					
					Before displaying a concrete example, let us introduce an explicit algorithm developed in \cite{Diener1994} to compute the divisor $\mathcal{D}$ defined by \eqref{DefDivisorD}.  In this section we always suppose $W(\lambda)$ is of the truncated form \eqref{WlambdaLaurentPoly} and its spectral curve has no singularities on the affine part. 
					
					Let $e^*$ be the row vector
					\begin{equation}
						e*:=(1,0,0,\dots,0)\in \mathbb{C}^{n+1}\,.
					\end{equation}
					Define the polynomial
					\begin{equation}\label{Dlambda}
						D(\lambda):=\det\begin{pmatrix}
							e^* \\
							e^*W(\lambda)\lambda^m \\
							e^*W(\lambda)^2\lambda^{2m} \\
							\vdots \\
							e^*W(\lambda)^n\lambda^{nm}
						\end{pmatrix}\,.
					\end{equation}
					And define polynomials $r_{j,k}(\lambda)$, $j,k=1,\dots,n+1$ by
					\begin{equation}
						\Delta_{j,1}(\lambda,\mu)=\sum_{k=1}^{n+1}r_{j,k}(\lambda)\mu^{n+1-k}\,.
					\end{equation}
					Here $\Delta_{j,1}$ are cofactors of the matrix $\mu I_{n+1}-\lambda^mW(\lambda)$, therefore $r_{j,1}=\delta_{j,1}$.
					Following proposition shows how to compute the divisor $\mathcal{D}$ of the matrix $W(\lambda)$.
				\begin{pro*}[\cite{Diener1994}]\label{divisorAlg}
						Let $\mathcal{D}=Q_1+\dots+Q_g$, and $Q_i=(\lambda_i,\mu_i)$. Then $D(\lambda)$ is a polynomial of degree $g$, whose zeros are $\lambda_1,\dots,\lambda_g$.
						
						Suppose $\lambda_1,\dots,\lambda_g$ are distinct, then the rectangular matrix $C(\lambda)=(r_{j,k}(\lambda))_{1\leq j\leq n+1, 1\leq k\leq n}$, evaluated at $\lambda_1,\dots, \lambda_g$ are of rank $n$. For every $k=1,\dots,g$, let $C_k$ be a nonzero $n$-minor of the matrix $C(z_k)$.
						\begin{equation}
							C_k:=(C(\lambda_k)_{i_s,j})_{1\leq s,j\leq n}\,.
						\end{equation}
						And let $\hat{C}_k$ be the matrix obtained from $C_k$ by changing the last column
						\begin{equation}
							r_{i_s,n}\mapsto r_{i_s,n+1}\,\quad s=1,\dots,n\,.
						\end{equation}
						Then 
						\begin{equation}
							\mu_k=-\frac{\det \hat{C}_k}{\det C_k}\,.
						\end{equation}
					\end{pro*}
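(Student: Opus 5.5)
The plan is to compare $D(\lambda)$ with the matrix of normalized eigenvectors $\Psi(\lambda)=(\bm\psi(P_1),\dots,\bm\psi(P_{n+1}))$ used in the proof of Lemma \ref{reConLem}, and to get the formula for $\mu_k$ from a linear system solved by Cramer's rule. Throughout I make the change \eqref{changeW} and write $W$ for $\lambda^mW(\lambda)$.

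\emph{Step 1 (product formula for $D$).} For $\lambda$ not a branch point, $e^*\bm\psi(P_i)=\psi_1(P_i)=1$ and $e^*W^k\bm\psi(P_i)=\mu_i^k$. Multiplying the matrix in \eqref{Dlambda} on the right by $\Psi(\lambda)$ therefore gives the Vandermonde matrix $(\mu_i^{k})_{0\le k\le n,\,1\le i\le n+1}$. Hence
\begin{equation*}
D(\lambda)\det\Psi(\lambda)=\prod_{i<j}(\mu_j-\mu_i)\,.
\end{equation*}
Both $\det\Psi$ and the Vandermonde change sign under the same permutations, so their quotient is a single-valued function of $\lambda$.

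\emph{Step 2 (zeros of $D$).} At a pole $Q_k$ of $\bm\psi$, rescale the eigenvector to $\tilde{\bm\psi}$ as in \eqref{psi_r}, so that it is finite and nonzero there; then $e^*\tilde{\bm\psi}(Q_k)=0$. This gives $e^*W(\lambda_k)^j\tilde{\bm\psi}(Q_k)=\mu_k^j\,e^*\tilde{\bm\psi}(Q_k)=0$ for every $j$. So all rows of the matrix in \eqref{Dlambda} annihilate a nonzero vector, and $D(\lambda_k)=0$.

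Conversely, $D$ is a polynomial, since $W$ is polynomial in $\lambda$. By Step 1, any zero of $D$ away from the branch points is a pole of $\det\Psi$, i.e. the projection of some $Q_i$. At a branch point the simple zeros of the Vandermonde are matched by zeros of $\det\Psi$, by the root-space description in Lemma \ref{lemRootspace}, so no extra zeros of $D$ occur there. The multiplicity bookkeeping here is the delicate local point.

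\emph{Step 3 (degree of $D$).} I compute $\deg D$ from the asymptotics at $\infty_C$. By \eqref{muAsym} the Vandermonde grows like $\lambda^{(m+\frac1{n+1})\frac{n(n+1)}2}$ times a nonzero constant. By \eqref{normPsiExpandsion}, $\det\Psi$ is, to leading order, the Vandermonde in the $z(P_i)^{-1}$, of growth $\lambda^{\frac{n}{2}}$. Their ratio has growth $\lambda^{mn(n+1)/2}=\lambda^g$, so $\deg D=g$ with nonzero leading coefficient. With Step 2 (counting multiplicity via the local picture) this identifies the zeros of $D$ as exactly $\lambda_1,\dots,\lambda_g$.

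\emph{Step 4 (the second coordinates $\mu_k$).} The vector of cofactors $(\Delta_{j,1}(\lambda,\mu))_j$ is, up to scaling, the kernel vector of $\mu I-W$ read off from its first row/column. Its proportionality to the eigenvector with normalized first entry should force every $\Delta_{j,1}$ to vanish at $Q_k=(\lambda_k,\mu_k)$. The reason is that $Q_k$ is a pole of $\bm\psi=(\Delta_{\cdot,1})/\Delta_{1,1}$-type ratios exactly when the normalizing entry vanishes, while the adjugate has rank one on the smooth affine part. I will verify this by writing $\psi_j$ as a ratio of cofactors and checking that $Q_k$ being a common zero is equivalent to its being a pole.

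Granting this, we get $\sum_{k'}r_{j,k'}(\lambda_k)\mu_k^{n+1-k'}=0$ for all $j$. Using $r_{j,1}=\delta_{j,1}$, this is a linear system in the unknowns $(\mu_k^{n},\dots,\mu_k)$ with coefficient matrix $C(\lambda_k)$ and right-hand side $-(r_{j,n+1}(\lambda_k))_j$. The rank $n$ of $C(\lambda_k)$ follows from uniqueness of the solution: a larger kernel would give a second point of $C$ over $\lambda_k$ lying in $\mathcal{D}$, contradicting distinctness of the $\lambda_i$ and $l(\mathcal D)=1$. Then Cramer's rule applied to a nonzero $n$-minor $C_k$, solving for the last unknown $\mu_k$, gives $\mu_k=-\det\hat C_k/\det C_k$.

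I expect the main obstacle to be Step 4: the precise vanishing of all cofactors $\Delta_{j,1}$ at the $Q_k$, together with the rank-$n$ claim.
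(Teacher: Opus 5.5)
The paper quotes this proposition from Diener--Dubrovin without proof, so I am judging your argument on its own terms. Steps 1--3 are fine. $D\det\Psi$ is the Vandermonde in the $\mu_i$. $D(\lambda_k)=0$ because the rescaled $\tilde{\bm\psi}(Q_k)$ has vanishing first entry. The asymptotics give $\deg D=g$ with nonzero leading coefficient. For distinct $\lambda_k$ this already proves the first claim. When the $\lambda_k$ coincide you still owe the local multiplicity count (via Lemma~\ref{lemRootspace}), but that part of your plan is viable.

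The genuine gap is Step 4: the mechanism you propose is backwards. Write $\tilde W:=\lambda^mW(\lambda)$ and $A=\mu I-\tilde W$. The $\Delta_{j,1}$ are the cofactors of the entries of the first \emph{column} of $A$, so $(\Delta_{1,1},\dots,\Delta_{n+1,1})=e^*\operatorname{adj}A$ is the first \emph{row} of the adjugate. Since $(\operatorname{adj}A)A=S\,I=0$ on $C$, this row is a left eigenvector, proportional to $\bm\psi^\dagger$, not to $\bm\psi$.

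The vector proportional to $\bm\psi$ is the first column $(\operatorname{adj}A)e_1$, and for that vector the equivalence you intend to verify fails. Indeed $\psi_j=(\operatorname{adj}A)_{j1}/(\operatorname{adj}A)_{11}$ has a pole at $Q_k$ precisely because the numerators do \emph{not} all vanish there. So whichever reading of $\Delta_{j,1}$ you intended, your two assertions (proportionality to $\bm\psi$, and common vanishing at the poles) are incompatible.

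Already for $n=1$ this is visible. Take $\tilde W=\bigl(\begin{smallmatrix}a&b\\ c&-a\end{smallmatrix}\bigr)$ with $b$ monic of degree $m$. The poles of $\bm\psi$ are $Q_k=(\lambda_k,-a(\lambda_k))$ with $b(\lambda_k)=0$. The column $(\mu+a,c)^T\propto\bm\psi$ does not vanish there, since generically $c(\lambda_k)\neq0$. The row $(\Delta_{1,1},\Delta_{2,1})=(\mu+a,b)$ does vanish there, and it yields $\mu_k=-a(\lambda_k)$.

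The correct argument is the same observation as in your Step 2. The identity $A\operatorname{adj}A=0$ places every column of $\operatorname{adj}A$ at $Q_k$ in $\ker(\mu_k-\tilde W(\lambda_k))=\C\,\tilde{\bm\psi}(Q_k)$. This kernel is one-dimensional by smoothness, and its spanning vector has zero first entry. Hence the entire first row vanishes, i.e.\ every $\Delta_{j,1}(\lambda_k,\mu_k)=0$.

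The rank claim is also not established. A rank drop of $C(\lambda_k)$ only enlarges the solution set of a linear system whose solutions need not have the form $(\mu^n,\dots,\mu)$. So it produces no second point of $C$ over $\lambda_k$, and $l(\mathcal D)=1$ plays no role.

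What works is the following. With $S=\sum_{i=0}^{n+1}a_i\mu^{n+1-i}$ one has $\operatorname{adj}(\mu-\tilde W)=\sum_{l=0}^{n}\mu^{n-l}\sum_{i\le l}a_i\tilde W^{l-i}$. Hence the full coefficient matrix $R=(r_{j,l})_{j,l=1}^{n+1}$ satisfies $R^T=TM$, with $T$ unipotent lower triangular and $M$ the matrix whose determinant defines $D$; so $\det R=D$. For distinct $\lambda_k$ the zeros of $D$ are simple, so $\dim\ker R(\lambda_k)=1$. By the vanishing above, $R(\lambda_k)(\mu_k^n,\dots,\mu_k,1)^T=0$, so the last column of $R(\lambda_k)$ lies in the span of the first $n$. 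Therefore $\operatorname{rank}C(\lambda_k)=n$, and your Cramer's-rule step then completes the proof.
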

					
					 The $n=1$ case was  discussed in \cite{Du19}.

					We will consider the $n=2$ case in more details in this section.
					
					In this case let 
					\begin{equation}
						W(\lambda)=\begin{pmatrix}
							u_1(\lambda)+u_2(\lambda) & u_3(\lambda) & u_4(\lambda) \\
							u_5(\lambda) & -u_2(\lambda) & u_6(\lambda) \\
							u_7(\lambda) & u_8(\lambda )& -u_1(\lambda)
						\end{pmatrix}\,.
					\end{equation}
					Here 
					\begin{align}
						u_i(\lambda)=\lambda\delta_{i,7}+u_{i,-1}+\sum_{k=0}^{m-1}\frac{u_{i,k}}{\lambda^{k+1}}\,,\quad i=1,2,5,7,8\,,\\
						u_i(\lambda)=\delta_{i,3}+\delta_{i,6}+\sum_{k=0}^{m-1}\frac{u_{i,k}}{\lambda^{k+1}}\,, i=3,4,6\,.
					\end{align}
					The spectral curve of $W(\lambda)$ is a genus $g=3m$ Riemann surface. By Proposition \ref{divisorAlg} we calculate the polynomial \eqref{Dlambda}.
					\begin{align}
						D(\lambda)&=\lambda^{3m}\left(-u_1(\lambda)u_3(\lambda)u_{4}(\lambda)+u_2(\lambda)u_{3}(\lambda)u_4(\lambda)+u_3(\lambda)^2u_6(\lambda)-u_4(\lambda)^2u_8(\lambda)\right)\nonumber\\
						&=\lambda^{3m}+{\rm l.o.t.}\,.
					\end{align}
					Let $\lambda_{1},\dots, \lambda_{g}$ be $g$ zeros of $D(\lambda)$, and $\mu_k$, $k=1,\dots,g$ are given by the algorithm in Proposition \ref{divisorAlg}. Then $Q_k=(\lambda_k,\mu_k)\in C$ and $\mathcal{D}=Q_1+\dots+Q_g$ is the divisor corresponding to $W(\lambda)$.
					
					When $m=1$ the explicit formula is 
					\begin{align}
						D(\lambda)=&\lambda^{3}+\left(-u_{4,0} u_{1,-1}+u_{2,-1} u_{4,0}+2 u_{3,0}+u_{6,0}\right) \lambda^{2}\nonumber\\
						&+(-u_{1,-1} u_{3,0} u_{4,0}+u_{2,-1} u_{3,0} u_{4,0}\nonumber
						\\&-u_{8,-1} u_{4,0}^{2}-u_{1,0} u_{4,0}+u_{2,0} u_{4,0}+u_{3,0}^{2}+2 u_{3,0} u_{6,0}) \lambda \nonumber
						\\&-u_{1,0} u_{3,0} u_{4,0}+u_{2,0} u_{3,0} u_{4,0}+u_{3,0}^{2} u_{6,0}-u_{4,0}^{2} u_{8,0}\,,\\
						\mu_k=&-\frac{\lambda_k^2+(u_{2,-1}u_{4,0}+u_{3,0}+u_{6,0})\lambda_k+u_{2,0}u_{4,0}+u_{3,0}u_{6,0}}{u_{4,0}}\,,\quad k=1,2,3\,.
					\end{align}
					
					In this case, we can choose the holomorphic differentials as
					\begin{align}
						&\alpha_i=\frac{\lambda^{i-1}d\lambda}{S_{\mu}(\lambda,\mu)}\,,\quad i=1,\dots,2m\,,\\
						&\alpha_i=\frac{\lambda^{i-2m-1}\mu d\lambda}{S_{\mu}(\lambda,\mu)}\,, \quad i=2m+1,\dots,3m\,.
					\end{align}
					Let 
					\begin{equation}
						A=\begin{pmatrix}
							\frac{1}{S_{\mu}(\lambda_1,\mu_1)} & \dots & \frac{1}{S_{\mu}(\lambda_g,\mu_g)} \\
							\frac{\lambda_1}{S_{\mu}(\lambda_1,\mu_1)} & \dots & \frac{\lambda_g}{S_{\mu}(\lambda_g,\mu_g)} \\
							\vdots &  & \vdots \\
							\frac{\lambda_1^{2m-1}}{S_{\mu}(\lambda_1,\mu_1)} & \dots & \frac{\lambda_g^{2m-1}}{S_{\mu}(\lambda_g,\mu_g)} \\
							\frac{\mu_1}{S_{\mu}(\lambda_1,\mu_1)} & \dots & \frac{\mu_g}{S_{\mu}(\lambda_g,\mu_g)} \\
							\vdots &  & \vdots \\
							\frac{\mu_1\lambda_1^{m-1}}{S_{\mu}(\lambda_1,\mu_1)} & \dots & \frac{\mu_g\lambda_g^{m-1}}{S_{\mu}(\lambda_g,\mu_g)}
						\end{pmatrix}\,.
					\end{equation}
					The  evolution of $\mathcal{D}$ can be described as 
					\begin{equation}
						\begin{pmatrix}
							\frac{d\lambda_1}{dt_k} \\
							\vdots \\
							\frac{d\lambda_g}{dt_k}
						\end{pmatrix}=A^{-1}\bm U^{(k)}\,.
					\end{equation}
					Here $\bm U^{(k)}$ are given by equation \eqref{UkVec}. When $m=1$ we write down the first flow.
					\begin{equation}\label{A2Dubrovinm1}
						\frac{d \lambda_i}{dt_1}=\frac{S_{\mu}(\lambda_i,\mu_i)}{\prod_{j\neq i}(\lambda_i-\lambda_j)}\left(\sum_{k=1}^{3}\frac{\mu_k}{\prod_{j\neq k}(\lambda_{j}-\lambda_{k})}\right)^{-1}\,,\quad i=1,\dots,3\,.
					\end{equation}
					We list first two $\tau$-structures restricted on $W(\lambda)$ when $m=1$.
					\begin{align}
						F_{1,0;1,0}&=
						-\frac{1}{3} u_{8,-1}
						-\frac{1}{3} u_{5,-1}
						-\frac{1}{3} u_{1,-1}^{2}
						-\frac{1}{3} u_{2,-1}^{2}
						-\frac{1}{3} u_{1,-1} u_{2,-1}
						+\frac{2}{3} u_{4,0}\\
						&=-\frac{1}{3}h_{1,-1}-\left(\sum_{k=1}^{3}\frac{\mu_k}{\prod_{j\neq k}(\lambda_{j}-\lambda_{k})}\right)^{-1}\,,\\
						F_{1,0;2,0}&=
						\frac{1}{3} u_{3,0}
						+\frac{1}{3} u_{6,0}
						+\frac{1}{3} u_{2,-1} u_{4,0}
						-\frac{2}{3} u_{1,-1}^{2} u_{2,-1}
						-\frac{2}{3} u_{1,-1} u_{2,-1}^{2}\nonumber\\
						&+\frac{2}{3} u_{8,-1} u_{1,-1}
						+\frac{2}{3} u_{8,-1} u_{2,-1}
						-\frac{2}{3} u_{5,-1} u_{1,-1}
						-\frac{2}{3} u_{7,-1}\\
						&=-\frac{2}{3}h_{2,-1}-\left(\sum_{k=1}^{3}\frac{\mu_k\sum_{j\neq k}\lambda_j}{\prod_{j\neq k}(\lambda_j-\lambda_k)}\right)\left(\sum_{k=1}^{3}\frac{\mu_k}{\prod_{j\neq k}(\lambda_{j}-\lambda_{k})}\right)^{-1}\,.
					\end{align}
					Here $h_{1,-1},h_{2,-1}$ are hamiltonians and the explicit formulas are
					\begin{align}
						h_{1,-1}&=
						u_{1,-1}^{2}
						+u_{1,-1} u_{2,-1}
						+u_{2,-1}^{2}
						+u_{4,0}
						+u_{5,-1}
						+u_{8,-1}\,,\\
						h_{2,-1}&=
						u_{1,-1}^{2} u_{2,-1}
						+u_{1,-1} u_{2,-1}^{2}
						+u_{5,-1} u_{1,-1}
						-u_{8,-1} u_{1,-1}\nonumber
						\\&+u_{2,-1} u_{4,0}
						-u_{8,-1} u_{2,-1}
						+u_{3,0}
						+u_{6,0}
						+u_{7,-1}\,.
					\end{align}
					Let $r_1=F_{1,0;1,0},r_2=F_{1,0;2,0}$. They satisfy the  Boussinesq equation written in normal coordinates. 
					\begin{align}
						&\frac{\partial r_1}{\partial t_2}=\frac{\partial r_2}{\partial t_1}\,,\label{normalA2r1}\\
						&\frac{\partial r_2}{\partial t_2}=-\frac{1}{3}	\frac{\partial^3 r_{1}}{\partial(t_1)^3}-4r_1\frac{\partial r_1}{\partial t_1}\,.\label{normalA2r2}
					\end{align}
					Let
					\begin{align}
						v_1&=3r_1\,,\\
						v_2&=\frac{3}{2}\frac{\partial r_1}{\partial t_1}+\frac{3}{2}r_2\,.
					\end{align}
					We get
					\begin{align}
						\frac{\partial v_1}{\partial t_2}=&-\frac{\partial^2 v_1}{\partial (t_1)^2}+2\frac{\partial v_2}{\partial t_1}\,,\\
						\frac{\partial v_2}{\partial t_2}=&\frac{\partial^2 v_2}{\partial (t_1)^2}-\frac{2}{3}\frac{\partial^3 v_1}{\partial (t_1)^3}-\frac{2}{3}v_1\frac{\partial v_1}{\partial t_1}	\,.
					\end{align}
					These are the Gelfand--Dickey hierarchy \eqref{GelfandDickey} for $t_2$, where $L=\partial^3+v_1\partial+v_2$.
					 Equations \eqref{A2Dubrovinm1} can be viewed as the Dubrovin equations of the Boussinesq hierarchy. 
					
					In \cite{Dickson1999,Dickson1999b}, Dickson--Gesztesy--Unterkofler obtained two series of algebro-geometric solutions to the Boussinesq hierarchy, which are related to algebraic curves of genus $g=3m$ and $g=3m+1$, $m\geq 1$. In this paper only the solutions related to genus $g=3m$ curves are involved.

				In particular, when $m=1$, and consider  
				\begin{equation}
						W(\lambda)=\begin{pmatrix}
							-\frac{62}{3\lambda} & 1 & \frac{4}{\lambda} \\
							-4 & -\frac{8}{3\lambda} & 1 \\
							\lambda-\frac{121}{\lambda} & 7 & \frac{70}{3\lambda}
						\end{pmatrix}\,.
					\end{equation}
					The spectral curve of $\lambda W(\lambda)$ is 
					\begin{equation}
						C: \mu^3-(7\lambda^2+\frac{16}{3})\mu-\lambda^4-\frac{47\lambda^2}{3}+\frac{128}{27}=0\,.
					\end{equation}
					Unlike the general cases, there are three nodal points on the affine part of $C$. Therefore $C$ is a rational curve, and can be parametrized by $w\in \mathbb{P}^1$
					\begin{equation}
						\left\{\begin{aligned}
							\lambda&=w^3-4w\\
							\mu&=w^4-3w^2-\frac{8}{3}\,.
						\end{aligned}\right.
					\end{equation}
					This map is one-to-one except for the nodal points. For each nodal point on $C$, there are two values of $w$ corresponding to it, and we denote these two values by $b_j,c_j$, $j=1,2,3$.
					In this example:
					\begin{equation}
						b_1=-2,\, b_2=-\frac{1}{2}-\frac{\sqrt{5}}{2},\, b_3=-\frac{1}{2}+\frac{\sqrt{5}}{2}\,,c_1=2,\, c_2=\frac{1}{2}-\frac{\sqrt{5}}{2},\, c_3=\frac{1}{2}+\frac{\sqrt{5}}{2}\,.
					\end{equation}
					And the $i$-th nodal point on $C$  is the point with 
					\begin{equation}
						w=b_i,c_i\,,\quad i=1,2,3\,.
					\end{equation}
					
					From the viewpoint of Mumford \cite{Mumford83}, we can treat this curve as a limit of a genus $3$ curve. By choosing suitable cycles $a_i,b_i$, $i=1,2,3$. The limit of entries of the periodic matrix are
					\begin{equation}
						B_{ij}=\log\frac{(b_j-b_i)(c_j-c_i)}{(b_j-c_i)(c_j-b_i)}\,,\quad i\neq j \,.
					\end{equation}
					When $i=j$, we have
					\begin{equation}
						{\rm Re}(B_{jj})\rightarrow -\infty\,.
					\end{equation}
					For any $\bm u\in \mathbb{C}^g$, denote $\tilde{\bm u}=(u_1-\frac{B_{11}}{2},\dots,u_g-\frac{B_{gg}}{2})$. The limit of the $\theta$-function, are the $\theta$-function of the generalized Jacobian for singular curve $C$, introduced by Mumford \cite{Mumford83} (see also \cite{Kodama24}).
					\begin{equation}
						\theta(\tilde{\bm u})\rightarrow\theta_M(\bm u)
					\end{equation}
					Here 
					\begin{equation}
						\theta_M({u}_1,{u}_2,{u}_3)=\sum_{m\in\{0,1\}^3}\exp\left(\sum_{i<j}B_{ij}m_im_j+\sum_{j=1}^3{u}_jm_j\right)\,.
					\end{equation}
					
					We can still consider the divisor of poles from this $W(\lambda)$:
					\begin{equation}
						\mathcal{D}=Q_1+Q_2+Q_3\,.
					\end{equation}
					In this example, we have
					\begin{equation}
						w(Q_1)=0,\,w(Q_2)=\sqrt{2},\,w(Q_3)=-\sqrt{2}\,.
					\end{equation}
					On this singular curve, the normalized holomorphic differential forms become
					\begin{equation}
						\omega_j=\frac{dw}{w-b_j}-\frac{dw}{w-c_j}\,,\quad i=1,\dots,3\,.
					\end{equation}
					 The point $\bm u$ in $J(C)$ is
					\begin{equation}
						\bm u=\mathcal{D}-\infty_C-\Delta=\sum_{i=1}^{3}\int_{\infty_C}^{Q_i}\bm\omega+\bm K_{\infty}\,.
					\end{equation}
					Here $\bm \omega=(\omega_{1},\dots,\omega_{3})$ and $\bm K_{\infty}=2\infty_C-\Delta$ is the Riemann constant, see \cite{Fay,Dubrovin1981}. The limit of $\bm K_{\infty}$ is 
					\begin{equation}
						K_j-\frac{B_{jj}}{2}\rightarrow \tilde{K}_{j}=\pi\sqrt{-1}-\sum_{k\neq j}\log\frac{c_k-b_j}{c_k-c_j}\,.
					\end{equation}
					Therefore the limit of $ u_j-\frac{B_{jj}}{2}$, $j=1,2,3$ are
					\begin{equation}
						u_j-\frac{B_{jj}}{2}\rightarrow \tilde{u}_j=\sum_{k=1}^{3}\log\frac{w(Q_k)-b_j}{w(Q_k)-c_j}+\pi\sqrt{-1}-\sum_{k\neq j}\log\frac{c_k-b_j}{c_k-c_j}\,.
					\end{equation}
					To get $Z(\bm t)$, we still need to compute the vector $\bm V^{(k)}$ and the constants $q_{k,l}$.
					For $\bm V^{(k)}$, equation \eqref{expand_omega} still holds, which is
					\begin{equation}
						\bm\omega(\lambda)=\frac{1}{3}\sum_{k\geq 1}\bm V^{(k)}\lambda^{-\frac{k}{3}-1}d\lambda\,.
					\end{equation}
					From this formula we get the vector $\bm V^{(k)}$\,.
					The  fundamental normalized bi-differential on $C$ is 
					\begin{equation}
						\omega(w_1,w_2)=\frac{dw_1dw_2}{(w_1-w_2)^2}\,.
					\end{equation}
					Using \eqref{expand_bidiffer} we have following equation for computing $q_{j,k}$:
					\begin{equation}
						\frac{dw_1dw_2}{(w_1-w_2)^2}=\frac{dz_1dz_2}{(z_1-z_2)^2}+\sum_{j,k\geq 1}q_{j,k}z_1^{j-1}z_2^{k-1}dz_1dz_2\,.
					\end{equation}
					Here 
					\begin{equation}
						z=\lambda^{-\frac{1}{3}}=(w^3-4w)^{-\frac{1}{3}}\,.
					\end{equation}
					The $\tau$-function is therefore
					\begin{equation}
						Z(\bm t)=e^{\sum_{k,l\geq1}\frac{1}{2}q_{k,l}t_kt_l}\theta_M(\tilde{\bm u}-\sum_{k\geq 1}t_k\bm V^{(k)})\,.
					\end{equation}
					As we have showed, this $\tau$-function is a limit of that in \eqref{GDtau}.
					
					In this particular example, let $t_1=x, t_2=t$, and other $t_k$ equals to $0$. The $\tau$-function is 
					\begin{align}
						Z(x,t)=e^{-\frac{2x^2}{3}-\frac{16t^2}{9}}\bigl(&1+5e^{4x}+2e^{-\sqrt{5}t+x}+2e^{-\sqrt{5}t+5x}+2e^{\sqrt{5}t+x}\nonumber\\
						&+2e^{\sqrt{5}t+5x}+5e^{2x}+e^{6x}\bigr)\,.
					\end{align}

					Let $r_1=\frac{\partial^2 \log Z(x,t)}{\partial x^2}$, $r_2=\frac{\partial^2\log Z(x,t)}{\partial x\partial t}$, we can verify that they satisfy the  Boussinesq equation \eqref{normalA2r1} and \eqref{normalA2r2}.  This is a regular $3$-soliton solution. For a detailed description of regular solitons of the Gelfand--Dickey hierarchy, one can refer to \cite{Kodama25}.  One can also verify that the $N$-th order Taylor coeffients of $\log Z(x,t)$ are all rational numbers.
					
				\section*{Acknowledgments}
					I would like to thank professor Di Yang for his advice and helpful discussions, and professor Cheng Zhang for helpful suggestions. The work is  supported by NSFC No. 12371254 and CAS No. YSBR-032.

				\end{document}